\newcommand {\cD}{{\mathcal{D}}}
\newcommand {\cI}{{\mathcal{I}}}
\newcommand {\cP}{{\mathcal{P}}}
\newcommand {\cX}{{\mathcal{X}}}
\newcommand {\ba} {{\bf a}}
\newcommand {\bff} {{\bf f}}
\newcommand {\bg} {{\bf g}}
\newcommand {\bs} {{\bf s}}
\newcommand {\bx} {{\bf x}}
\newcommand {\bw} {{\bf w}}
\newcommand{\avg}{{\rm avg}}
\newcommand{\opt}{{\rm opt}}
\newcommand {\N} {{\rm I\kern-1.5pt N}}
\newcommand {\R} {{\rm I\kern-2.5pt R}}
\newtheorem{lemma}{Lemma}
\newtheorem{coro}{Corollary}
\newtheorem{theorem}{Theorem}
\newtheorem{defn}{Definition}
\newtheorem{assm}{Assumption}
\newcommand{\beqa}{\begin{eqnarray}}
\newcommand{\eeqa}{\end{eqnarray}}
\newcommand{\beqan}{\begin{eqnarray*}}
\newcommand{\eeqan}{\end{eqnarray*}}
\newcommand{\beq}{\begin{equation}}
\newcommand{\eeq}{\end{equation}}
\newcommand{\bfl}{\begin{flushleft}}
\newcommand{\efl}{\end{flushleft}}
\newcommand{\myb}{\hspace{-0.1in}}
\newcommand{\myeq}{& \hspace{-0.1in} = & \hspace{-0.1in}}
\newcommand{\lb}{\nonumber \\}
\newcommand{\myarr}{\begin{array}{lll}}
\newcommand{\mygeq}{& \myb \geq & \myb}
\newcommand{\myg}{& \myb > & \myb}
\newcommand{\myleq}{& \myb \leq & \myb}
\newcommand{\myl}{& \myb < & \myb}
\newcommand{\bitem}{\begin{itemize}}
\newcommand{\eitem}{\end{itemize}}
\newcommand{\benum}{\begin{enumerate}}
\newcommand{\eenum}{\end{enumerate}}
\newcommand{\myhb}{\hspace{-0.3in}}
\newcommand{\myhf}{\hspace{0.3in}}
\newcommand{\myf}{\hspace{0.1in}}
\newcommand{\myskip}{\\ \vspace{-0.1in}}
\newcommand{\ER}{Erd$\ddot{\rm o}$s-R$\acute{\rm e}$nyi }
\def\QED{~\rule[-1pt]{5pt}{5pt}\par\medskip}
\newenvironment{proof}{{\bf Proof: \ }}{ \hfill \QED}
\newcommand{\change}[1]{{\color{black}{#1}}}
\begin{document}
%
\title{Effects of Degree Correlations in Interdependent 
	Security: Good or Bad?}


\author{Richard J. La\thanks{This work was 
supported in part by contracts 70NANB13H012,  
70NANB14H015 and 70NANB16H024 from National Institute
of Standards and Technology.} 
\thanks{Author is with the Department of Electrical \& 
Computer Engineering (ECE) and the Institute for Systems 
Research (ISR) at the University of Maryland, College Park.
E-mail: hyongla@umd.edu}
}

\maketitle

\begin{abstract}
We study the influence of {\em degree correlations} or 
{\em network mixing} in interdependent security. We model the 
interdependence in security among agents using a
dependence graph and employ 
a population game model to capture the interaction among 
{\em many} agents when they are {\em strategic} and have
various security measures they can choose to defend themselves.
The overall network security is measured by 
what we call the {\em average risk exposure} (ARE)
from neighbors, which is proportional to the total (expected)
number of attacks in the network.

We first show that there exists a unique pure-strategy Nash 
equilibrium of a population game. Then,
we prove that as the agents with larger
degrees in the dependence graph see higher
risks than those with smaller degrees, the overall network 
security deteriorates in that the ARE 
experienced by agents increases and there are more
attacks in the network. Finally, using this 
finding, we demonstrate that the effects of network mixing
on ARE depend on the {\em (cost) 
effectiveness} of security measures available to agents; 
if the security measures are not effective, increasing
assortativity of dependence graph results in 
higher ARE. On the other hand, if the
security measures are effective at fending off the 
damages and losses from attacks, 
increasing assortativity reduces the
ARE experienced by agents. 

\end{abstract}

\begin{IEEEkeywords}
Assortativity, degree correlations, interdependent security, 
population game. 
\end{IEEEkeywords}

\IEEEpeerreviewmaketitle


\section{Introduction}
	\label{sec:Introduction}

As many critical engineering systems, such as power grids, become 
more connected, there is a growing interest in understanding the 
security of large, complex networks in which security of many
comprising agents or subsystems is interdependent. This is
dubbed {\em interdependent security} (IDS) by Kunreuther and
Heal~\cite{KunHeal2003}. It arises naturally in many settings, 
and examples include cybersecurity \cite{BolotLelarge2008, 
Jiang2011, LelargeBolot2008, LelargeBolot2009}, 
cyber-physical systems security (e.g., power grids)
\cite{Bou-Harb}, 
epidemiology \cite{Pastor2005, Schneider2011}, 
financial networks and systems~\cite{Beale2011, Caccioli2011, 
Caccioli2012}, 
homeland security~\cite{HealKun2002, KunMichel2009}, and
supply chain and transportation system security (e.g., 
airline security)~\cite{Gkonis2010, 
KearnsOrtiz}.

The sizes and complexity of these systems as well as 
the number of participating agents introduce several 
major challenges to studying their reliability and 
security. This is especially the case
when they contain {\em many} 
individuals, organizations or (sub)systems that can 
make {\em local} security decisions {\em based on 
locally observable risks}.
Throughout the manuscript, we refer to 
these individuals, organizations or systems that make
own security decisions simply as {\em agents}.

First, in many cases, it is reasonable to assume that 
the agents are {\em rational}
or {\em strategic} and are only interested in their own
objectives with little or no regards for others. 
Therefore, a study of {\em static} settings in which
the agents make decisions without taking into account
the experienced risks may not be realistic.
Second, in IDS settings, the security of individual agents 
is interdependent, thereby causing the agents' security 
decisions to be {\em coupled} as a result of 
externalities produced by security measures they employ.
Furthermore, these externalities and the resulting
security risks seen by agents depend on the properties
of their dependence structure.  
Third, any attempt to model and study detailed interactions 
between {\em many} strategic agents suffers from the 
{\em curse of dimensionality}. Finally, while there are
some popular metrics used in the literature (e.g., 
global cascade probability), there is a {\em lack of 
standard metrics} on which security experts agree
for measuring network- or system-level security.

As mentioned above, the security of the systems in IDS settings 
depends on many system properties, including the properties
of interdependence in security among agents, which we model 
using a {\em dependence graph}. 
Although the effects of
some graph properties (e.g., degree distributions and
clustering~\cite{CoupLelarge2, La_CDC2014, La_TON}) have 
been recently studied in the literature, to the best of
our knowledge, the {\em influence of the 
degree correlations} in the dependence graph with
strategic agents has not
been examined before. The degree correlations, which are 
also known as {\em assortative mixing}, 
{\em (degree) assortativity} or 
{\em network mixing}, refer to the correlations in 
the degrees of end nodes of edges present in the graph. 

It has been shown \cite{Newman2002, Newman2003} that 
engineered networks, e.g., the Internet, tend to be 
disassortative, whereas social networks are typically
assortative. In other words, nodes in engineered 
systems tend to be connected 
to other nodes with {\em dissimilar} 
degrees, while those in social networks 
exhibit a tendency to be neighbors 
with other nodes with {\em similar} degrees. These
correlations in the degrees of end nodes in the dependence 
graph change the security risk experienced by agents 
from their neighbors based on their own degrees.
This is because the security investments chosen by 
agents with different degrees are likely to vary and 
some agents are more vulnerable to attacks
than others.
The goal of our study is to shed some light on how the 
{\em degree correlations} in the dependence
graph affect the security 
investments of strategic agents and, in doing so, 
{\em the overall system security}. 

While there have been some {\em numerical} studies on 
the influence of network mixing on the robustness
of networks in {\em static} settings 
(e.g., \cite{Newman2002, Newman2003}),
as we will discuss in Section~\ref{sec:Related},
there are two key differences between our study and
existing studies: 
(i) In our study, 
agents are {\em strategic} and can choose how much they wish
to invest in security in response to the security risks they 
experience. (ii) The security measures adopted by an agent 
(e.g., incoming traffic 
monitoring, anti-malware utility) produce positive 
externalities \cite{Varian_Microeconomics}
and {\em alter the security risks and threats 
seen by other agents} in the network, thereby influencing
their security investments.

It is shown that positive externalities produced by security 
measures on neighbors often lead to {\em free riding}
\cite{La_CDC2014, La_TON, LelargeBolot2009}; when some agents 
invest in security measures, positive 
externalities they generate curtail the risk experienced by 
other agents, thus reducing 
their incentive to protect themselves and invest in security.
Consequently, they cause {\em under-investments}
in security by strategic agents and social inefficiency
\cite{Zhao2009}. For this reason, the presence of externalities 
in IDS considerably complicates the analysis of the interactions 
among strategic agents. 

Let us illustrate these concepts with the help of the following
example. 

\noindent
$\diamond$ {\bf Spread of malware via emails:} 
When a user's device 
is infected by malware, it can scan the user's 
emails or the hard disk drive of the infected machine 
and send the 
user's personal or other confidential information to 
criminals interested in stealing, for instance, the user's 
identity (ID) or trade secrets. 
Moreover, the malware can browse the user's address book 
and either forward it to attackers or 
send out bogus emails, i.e., email spoofing, 
with a link or an attachment to those on the contact list. 
When a recipient clicks on the link or
opens the attachment, it too becomes infected.

In order to reduce the risks or threats from
malware, users can 
install an anti-malware utility on their 
devices. When a user adopts an anti-malware tool, 
not only does it reduce its own risk, but it
also lessens the risk to those 
on its address book for the reason stated above, 
in doing so protecting its friends to some degree. 
Therefore, it produces {\em positive externalities} 
for others \cite{ShapiroVarian, Varian_Microeconomics}. 
Interestingly, these positive externalities diminish 
the benefits of installing
anti-malware utilities for others, thus introducing
{\em negative network effects} for them.

\subsection{Summary and main contributions}

For mathematical tractability, we employ a 
{\em population game}~\cite{Sandholm} to model
the interactions among agents. This model is a 
generalization of the model used in our previous studies
that considered {\em neutral dependence graphs}
\cite{La_CDC2014, La_TON};
we assume a {\em continuous} action space, where 
an action represents the security investment
chosen by an agent with an understanding that the agent
selects the best combination of security measures 
subject to the budget constraint.

In order to measure the {\em global} network security 
and the {\em local} security experienced by individual 
agents, 
which is then utilized for choosing security investments, 
we adopt what we call the {\em average risk exposure} 
(ARE) from neighbors. 
While other global metrics, such as the 
probability of cascading failures/infections, 
have been adopted by existing studies, including our 
previous study~\cite{La_TON}, we argue that the ARE is a 
more natural and meaningful metric for our purpose
for the following reasons.

Since the agents can base their decisions only 
on {\em local} information or risks they can observe 
and assess, we need to model the local security 
risks they experience. First, we will show that the
ARE captures the {\em average security risks 
agents of varying degrees perceive from a neighbor},  
which allow them to approximate their total security 
risks from all neighbors.
Second, the agents are unlikely to have access
to the value of a commonly adopted global metric 
(e.g., cascade probability) as they lack 
global information, including
network topology and the security decisions of 
other agents. This makes such global metrics
unsuitable as information on which the agents
can act. In contrast, the ARE also serves as a 
{\em global} security metric because
it is proportional to {\em the total (expected)
number of attacks in the network}. For this 
reason, it provides us with a consistent metric 
for (a) measuring the global security and (b) 
capturing the local security information on which 
agents act, and enables us to compare the overall 
network security as we vary the properties of
dependence graph.

Our main contributions can be summarized as follows:
\myskip

{\bf S1.}
We show that there exists a unique (pure-strategy) 
Nash equilibrium (NE) of a population game under a 
mild technical condition. Then, we examine how the
assortativity of dependence graph changes the ARE 
at the unique NE as agents with varying degrees 
experience different risks from their neighbors due
to degree correlations. In particular, 
we prove that when the agents with larger
degrees in the dependence graph see higher
risks than those with smaller degrees, the overall 
network security deteriorates in that the ARE
experienced by agents increases and there are more 
attacks in the network. 

{\bf S2.} 
Making use of this 
finding, we demonstrate that the effects of network 
mixing on ARE depend on the {\em cost 
effectiveness} of security measures available to agents; 
if the security measures are not effective, increasing
assortativity of dependence graph results in 
higher ARE. On the other hand, if the
security measures are effective at lowering the 
damages and losses from attacks, increasing 
assortativity reduces the ARE experienced by agents. 

{\bf S3.}
Using numerical studies, we examine how the
cost effectiveness of security measures and the
sensitivity of ARE to the vulnerability of agents to
attacks shapes
the influence of assortativity. Numerical results 
suggest that as security measures improve and become
more effective at fending off attacks, the assortativity 
of dependence graph has greater effects on network 
security. Similarly, when ARE is more sensitive to 
agents' vulnerability to attacks, assortativity has 
stronger impact on equilibrium ARE. 
\myskip

As summarized in the following section, existing studies 
demonstrated that the 
assortativity of a network can significantly affect its
robustness and resilience, e.g., 
\cite{Newman2003, YenReiter2012, Zhao2009}.  Thus, 
understanding the effects of dependence graph properties
is important to (i) predicting the overall network- or
system-level security and (ii) devising sound policies. 

While we admit that our analysis is carried out using a 
simplified model, to the best our knowledge,  
our work here and in \cite{La_CDC2014, La_TON, La_TNSE}
is the first (analytical) study of how the network security 
is shaped by the properties
of dependence graph that governs the interdependence in 
security among strategic agents. Unlike our 
previous studies that assumed neutral dependence graphs, 
however, the focus of the current study is the impact of 
degree correlations in the dependence
graph on network security. As summarized earlier, 
incorporating the strategic nature of agents leads to 
somewhat unexpected and interesting observation that the
net influence of degree correlations is also determined by
the effectiveness of available security measures.

We believe that the {\em qualitative}
nature of our findings provides valuable insights into 
the behavior of strategic agents in IDS settings, which
we hope would be helpful in (i) understanding the pitfalls
in studying the security of complex systems and (ii)
designing better security policies and regulations. 
Finally, we emphasize that our goal is to understand 
{\em the effects of heterogeneous security risks 
experienced by agents based on their degrees} (due to 
degree correlations) on network security, 
as opposed to accurate modeling of 
assortativity observed in real networks. Thus, it is 
not our intent to develop a more accurate model of
dependence graphs with degree correlations. 
\myskip

The remainder of the paper is organized as follows:  
We provide a short survey of closely related literature
in Section~\ref{sec:Related}.
Section~\ref{sec:Model} describes the population game
model we adopt for our analysis, and Section
\ref{sec:ARE} introduces the security metric
we employ for comparing network security and explains
how we model the effects of degree correlations. Section
\ref{sec:Preliminaries} introduces some preliminary 
results we need for our main findings reported in 
Section~\ref{sec:MainResults}. Numerical results
are provided in Section~\ref{sec:Numerical}.
We conclude in Section~\ref{sec:Conclusion}.

\section{Related literature}
	\label{sec:Related}
	
There are existing studies on IDS, 
many of which employ a game theoretic
approach to model the strategic nature of agents 
(e.g., \cite{Jiang2011, KearnsOrtiz, KunHeal2003}). 
We refer an interested reader to a survey paper by Laszka et al.
[27] and references therein for a succinct discussion of these
and other related studies. In addition, many researchers
investigated the existence of assortativity in many different
types of networks, e.g., \cite{BaglerSinha2007, Newman2003, 
Pira2012, YenReiter2012}. Here, we only focus on studies that
examined the effects of assortativity in epidemics or 
security-related settings and summarize their main findings.

In \cite{Newman2002, Newman2003}, Newman studied 
network 
mixing in different types of networks, including biological
networks, engineered networks, and social networks. He 
first showed that while social networks in general exhibit 
assortativity, both engineered and biological networks 
tend to be disassortative. He then investigated how 
assortativity affects the phase transition in the emergence
of a giant component in random graphs as the average degree
of nodes increases. 

His findings revealed that stronger 
assortativity makes it easier for a giant component to
appear, but at the same time, the size of the giant
component tends to be smaller. Furthermore, breaking
up the giant component by removing a subset of nodes with
the highest degrees becomes more difficult when the network is
assortative; the numerical results suggest that the number
of nodes that must be removed from the network to split
up the giant component in an assortative network can be an 
order of magnitude larger than that of a disassortative
network. 

He argued that these findings have following important 
implications. First, preventing an outbreak of a disease 
via vaccination of high-degree individuals can be problematic 
because social networks exhibit assortativity and 
the cluster of high-degree nodes could serve as a reservoir
of disease. However, for the same reason, an epidemic
will likely be limited to a smaller portion of 
population if an outbreak does occur. On the
other hand, improving the resilience of engineered 
networks, such as the Internet, which show disassortativity
becomes more challenging as disassortative networks 
are more susceptible to coordinated attacks that 
target high-degree nodes in the networks. 

In \cite{Boguna2003a, Boguna2003b}, Bogu$\tilde{\rm n}
\acute{\rm a}$ et al. studied the (existence) of epidemic 
threshold in scale-free networks, using the popular
susceptible-infected-susceptible model. The key 
finding of the study was that the degree correlations do 
not significantly affect the existence of epidemic 
threshold as long as the degree correlations are limited 
to immediate neighbors. Instead, the (lack of) the 
existence of threshold is shaped by the divergence of 
the second moment of node degrees when the power law
exponent lies in the interval (2, 3]. 

Another related study by Zhou et al.~\cite{Zhou2012}
investigated the influence of assortativity on the
robustness of {\em interdependent networks} with the 
help of independent failure model, using both 
\ER networks and scale-free networks. Their
main finding suggests that increasing assortativity
leads to deteriorating robustness of interdependent
network; as a network becomes more assortative, 
the initial number of nodes that need to be removed
in order to break up the giant component in the 
network drops. This indicates that it is easier to 
break up the giant component in an interdependent
network by eliminating randomly chosen nodes. 

In a more concrete cybersecurity application, 
Yen and Reiter
\cite{YenReiter2012} studied how the assortativity 
of botnets influences the performance of takedown
strategies. They first demonstrated that botnets
exhibit high assortativity and attributed this
in part to the working of botnets. 
Secondly, they showed that some of well studied 
takedown strategies, in particular uniform 
takedown and degree-based takedown strategies, 
are far less effective as the botnets become
more assortative. This finding suggests that previous
studies carried out with neutral botnets may be 
inaccurate and incorrectly portray a more optimistic
picture. Finally, they also considered
other alternative takedown strategies that 
take into account clustering coefficients and 
closeness centrality and showed that a similar
trend continues.

We note that these studies do not take into consideration 
the strategic nature of individual agents that can make
their own security decisions, which is 
natural in many settings of interest. 
Our study considers strategic agents that determine 
their security investments in response to the security
risks they observe. 
In addition, rather than focusing on giant components
in networks and possible cascades of infection, 
we analyze the (local) network security experienced by 
individual agents {\em as a result of their security 
decisions} at equilibria.

We studied related problems in 
\cite{La_CDC2014, 
La_TON, La_TNSE} under {\em neutral} dependence graphs. 
In \cite{La_TNSE}, we investigated (i) how we could 
improve the overall (network) security by 
{\em internalizing} the externalities 
produced by the security measures adopted by agents
and (ii) how the sensitivity of network security to 
agents' security investments influences the penalties 
or taxes that need to be imposed on the agents to
internalize externalities. 
Moreover, we showed \cite{La_CDC2014, La_TNSE}
that as the security of agents 
gets more interdependent in that their degrees
in the dependence graph become larger (with respect 
to the usual stochastic order \cite{SO}), the security 
experienced by agents whose degrees remain fixed
improves in that the number of attacks they suffer 
goes down. Thus, this finding tells us, to some extent,
how the {\em degree distribution} in the dependence
graph affects the network security.

In \cite{La_TON}, we considered a simple model 
where agents can choose from three possible actions: i) 
invest in security, ii) purchase security insurance 
to transfer (some of) risks, and iii) take no actions.
Using this model, we carried out {\em numerical studies} 
that examined how the degree distribution of dependence
graph affects 
the {\em cascade probability}. Our study demonstrated 
that as the interdependence in security rises, so does 
the probability of cascade. Moreover, we derived
an upper bound on the price of anarchy, i.e., the 
ratio of the social cost at the Nash equilibrium to
that of the social optimum, which is a linear function
of the average node degree.

We point out that none of the above studies,
including our own studies, 
investigated the role of network mixing in IDS settings 
with strategic agents and no analytical findings have 
been reported. A key difference between our study
in \cite{La_CDC2014, La_TNSE} and the current 
study is the following: our previous study focused on how 
varying degree distributions influence the network security
in a neutral dependence graph. The current study, on the
other hand, considers a {\em fixed degree distribution} 
and examines how differing security risks seen by agents
based on their own degrees (due to degree correlations), 
shape the resulting network security. 
Some of our preliminary results have been 
reported in \cite{La_Globecom}. It, however, employs a 
simpler, hence more restrictive model to facilitate the 
analysis.

\section{Model}
\label{sec:Model}

We capture the interdependence in security among the 
agents 
using an undirected graph, which we call the {\em dependence
graph}. A node or vertex in the graph corresponds to an 
agent (e.g., an individual or 
organization),
and an undirected edge between nodes $n_1$ and $n_2$ implies 
interdependence of their security. We interpret an undirected edge
as two directed edges pointing in the opposite directions
with an understanding that a directed edge from node $n_1$ to node
$n_2$ indicates that the security of node $n_1$ affects that of
node $n_2$ in the manner we explain shortly. When there is
an edge between two nodes, we say that they are {\em immediate}
or {\em one-hop} neighbors or, simply, neighbors when it is clear.

We model the interaction among agents as a {\em 
noncooperative game}, in which players are the 
agents.\footnote{We will use the words {\em agents}, 
{\em nodes} and {\em players} interchangeably hereafter.} 
This is reasonable because, in many cases, it may be difficult 
for agents to cooperate with each other and take coordinated 
countermeasures 
to attacks. In addition, even if they could coordinate their 
actions, they would be unlikely to do so when there are no clear 
incentives for coordination. 

We are interested in scenarios where the number of agents is large. 
Unfortunately, modeling detailed {\em microscale} interactions 
among many 
agents in a large network and analyzing ensuing games is difficult;
the number of possible strategy profiles typically increases
exponentially with the number of players and finding the NEs of 
noncooperative games is often challenging even with a moderate 
number of players.

The notation we adopt throughout the paper is listed in 
Table~\ref{tab:notation}.

\begin{table}[h]
\begin{center}
 
\begin{tabular}{l|l}
\hline 
$C(\bx, d, a, \bs)$ & Cost of an agent with degree $d$ 	
	playing action $a$ at \\
& \myf social state $\bx$ \\
$\cD$ & Set of agent degrees or populations \\
& \myf ($\cD = \{1, 2, \ldots, D_{\max}\}$) \\
$D_{\max}$ & Maximum degree among agents or the number 
	of \\
& \myf populations, i.e., $D_{\max} = |\cD|$ \\
$\cI$ & (Pure) action space ($\cI = [I_{\min}, 
	I_{\max}]$) \\ 
$I^{\opt}(r)$ & Optimal security investment of an agent 
	facing $r$ \\
& \myf expected attacks \\
$L$ & Average loss from a single infection \\
$\cP_{\cI}$ &  Set of probability distributions over 
	$\cI$ \\
$\cX$ & Cartesian product $\cP_{\cI}^{D_{\max}}$ \\
$d_{\avg}$ or $d_{\avg}(\bs)$ 
	& Average or mean degree of agents \\
& \myf ($d_{\avg}(\bs) = \sum_{d \in \cD} d \cdot 
	f_d(\bs)$) \\
$e_{\avg}(\bx, \bs)$
	& Average risk exposure at social state $\bx$ \\
$e_d(\bx, \bs)$
	& Risk exposure of pop. $d$ at social state
	$\bx$ \\
$f_d$ or $f_d(\bs)$ & Fraction of agents with degree 
	$d$ \\
& \myf ($f_d(\bs) = s_d / \sum_{d' \in \cD} s_{d'}$) \\
$\bg$ & Mixing vector ($\bg = (g_d; \ d \in \cD)$) \\
$p(a)$ & Infection prob. of an agent investing $a$
	in security \\
$p^\star(r)$ & Infection prob. of an agent facing $r$
	expected attacks \\
& \myf and investing $I^{\opt}(r)$ in security \\ 
& \myf ($p^\star(r) = p(I^{\opt}(r))$) \\
$p_{d, \avg}(\bx)$ & Average infection prob. of 
	population $d$ at social \\
& \myf state $\bx$ \\
$\bs$ & Pop. size vector ($\bs = (s_d; \ d \in \cD)$) \\
$s_d$ & Size of pop. $d \in \cD$ \\
$w_d$ or $w_d(\bs)$ & Weighted fraction of agents with 
	degree $d$ \\
& \myf \Big($w_d(\bs) = \frac{d \cdot s_d}{\sum_{d' \in \cD}
	d' \cdot s_{d'}} = \frac{d \cdot f_d(\bs)}
		{d_{\avg}(\bs)}$ \Big)  \\
$\bx_d$ & Pop. state of pop. $d$ \\
$\bx$ & Social state ($\bx = (\bx_d; \ d \in \cD)$) \\
$\beta_{IA}$ & Prob. of indirect attack on a neighbor by an
	infected \\
& \myf agent \\
$\tau_{A}$ & Prob. that an agent experiences a direct attack \\
\hline
\end{tabular}

\caption{Notation ({\rm pop. $=$ population, prob. $=$ probability}).} 
\label{tab:notation}
\end{center}
\vspace{-0.25in}
\end{table}

\subsection{Population game model}

For analytical tractability, we adopt a population game
with a continuous action space to model the interaction 
among the agents~\cite{Sandholm}. As stated earlier, 
the (local) network security is captured using ARE from  
neighbors. As we explain in Section~\ref{subsec:RE-ARE},
the ARE is proportional to the 
total (expected) number of attacks that propagate from the 
victims of successful attacks to their neighbors in the
network and can be viewed as a measure of global
network security. 

We assume that the maximum degree among all agents in the
dependence graph is $D_{\max} < \infty$.  For each $d \in 
\{1, 2, \ldots, D_{\max}\} =: \cD$, population $d$ 
consists of all agents with common degree $d$. Let $s_d$ 
denote the {\em size} or {\em mass} of population $d$, and 
the population size vector ${\bf s} := \big( s_d; \ d \in 
{\cal D} \big)$ tells us the sizes of populations with varying 
degrees.\footnote{Throughout the paper, all vectors
are assumed to be column vectors.}  

We find it convenient to define ${\bf f}({\bf s}) := (f_d({\bf s}); 
\ d \in \cD)$, where $f_d({\bf s}) = s_d / \sum_{d' \in \cD} s_{d'}$
is the {\em fraction} of agents with 
degree $d$ in the dependence graph. 
Given a population size vector ${\bf s}$, 
we denote the average degree of
agents by $d_{\avg}(\bs) := \sum_{d \in \cD} d \cdot f_d({\bf s})$. 
When there is no confusion, we simply denote ${\bf f}({\bf s})$ and
$d_{\avg}({\bf s})$ by ${\bf f}$ and $d_{\avg}$, respectively. 
\myskip

$\bullet$ {\bf Population state and social state --}
All agents have the same action space $\cI = [I_{\min}, 
I_{\max}] \subset \R_+ := [0, \infty)$, 
where $I_{\min} < I_{\max} < \infty$. 
A (pure) action taken by an agent represents the security 
investment made by the agent. We denote the
set of probability distributions over $\cI$ by ${\cal P}_\cI$. 

The {\em population state} of population $d$ is given 
by ${\bf x}_d 
\in {\cal P}_\cI$. In other words, given any (Borel) subset
${\cal S} \subseteq \cI$, $\bx_d({\cal S})$ tells us the {\em 
fraction} of population $d$ whose security investment lies in 
${\cal S}$.  The {\em social state}, denoted by $\bx = (\bx_d; \ 
d \in \cD) \in \cX := {\cal P}_\cI^{D_{\max}}$, specifies
the actions chosen by all agents.  
\myskip

$\bullet$ {\bf Two types of attacks --} 
In order to understand how the degree correlations of dependence 
graph affect the security
investments of the agents and overall network security, we 
model two different types of attacks agents suffer from 
-- {\em direct} and {\em indirect} attacks. 
While the first type of attacks is independent
of the dependence graph, the latter depends on it, allowing 
us to capture the {\em externalities} produced by agents' 
security choices. 

{\em a) Direct attacks: }
We assume that malicious attackers launch attacks 
on the agents, which we call {\em direct} attacks. 
While our model can be easily
modified to handle a scenario in which an agent can suffer
more than one direct attack from different attackers by
modifying the cost function, here 
we assume that an agent experiences at most one direct 
attack and the probability of bearing a direct attack
is $\tau_A$, independently of other agents. 

When an agent experiences a direct attack, its cost 
depends on its security investment; when an
agent adopts action $a \in \cI$, it is infected with 
probability $p(a) \in [0, 1]$. Also, each time an agent is
infected, it incurs on the average a cost of $L$. Hence, the 
expected cost or 
loss of an agent from a single attack is $L(a) := 
L \cdot p(a)$ when investing $a$ in security. 

It is shown~\cite{Bary} that, under some technical assumptions, 
the security breach probability or probability of loss is a 
{\em log-convex} (hence, strictly convex) decreasing function of 
the investments. Based on this finding, we introduce the following
assumption on the infection probability $p(a)$, $a \in \cI$. 
\myskip

\begin{assm} 	\label{assm:pa}
The infection probability $p: \cI \to [0, 1]$ 
is {\em continuous, decreasing and strictly convex}. Moreover, 
it is continuously differentiable on ${\rm int}(\cI)
= (I_{\min}, I_{\max})$. 
\myskip
\end{assm}

{\em b) Indirect attacks: }
Besides the direct attacks by the attackers, an agent may
also experience {\em indirect} attacks from its neighbors that have
sustained successful attacks and are infected.  We assume 
that an infected agent will unwittingly participate in indirect 
attacks on its neighbors, each of which is attacked 
with probability $\beta_{IA} \in (0, 1]$
independently of each other. When an agent investing $a$
in security suffers an indirect attack, it is infected with the same 
infection probability $p(a)$. 

We call $\beta_{IA}$ indirect attack probability (IAP).
It affects the {\em local} spreading behavior. Unfortunately, 
the dynamics of infection propagation depend on the details of
underlying dependence graph, which are difficult to obtain
or model faithfully. In order to skirt this difficulty, instead of
attempting to model the detailed dynamics of infection 
transmissions between agents, we abstract out the {\em security 
risks} seen by the agents using {\em the expected number of attacks 
an agent sees from its neighbors}.
However, to capture the effects of network mixing, we
allow agents of varying degrees to experience different risks from 
their neighbors as explained below and in Section~\ref{sec:ARE}. 
\myskip

$\bullet$ {\bf Cost function --} 
The cost function of the game is determined by a 
function $C: \cX \times \cD \times \cI \times \R_+^{D_{\max}} 
\to \R$. The 
interpretation is that, when the population size 
vector is $\bs$ and the social state is $\bx$, the cost of 
an agent with degree $d$ (hence, from population $d$) 
playing action $a \in \cI$ (thus, investing $a$ in security) is 
equal to $C(\bx, d, a, \bs)$. As we will show below, in addition 
to the cost of security investments, our cost function 
also reflects the (expected) losses from attacks. 

Given a social state $\bx \in \cX$, let $e_d(\bx, \bs)$ 
denote the average number of indirect attacks an 
agent with degree $d \in \cD$ sees from a {\em single} 
neighbor. Hence, the average number of indirect 
attacks experienced by agents of degree $d$ would be
$d \cdot e_d(\bx, \bs)$. One natural metric for
the security risk seen by agents is the number of 
attacks they expect to see. Hence, $e_d(\bx, \bs)$
captures the {\em security risk per neighbor} 
observed by agents of degree $d$. 
We call $e_d(\bx, \bs)$ the {\em risk exposure} (RE) 
for population $d$ at social state $\bx$. Since 
we are interested in understanding how network mixing 
affects the agents' security investments, it is 
necessary to allow the RE to vary from one population 
to another, i.e., $e_{d}(\bx, \bs)$
and $e_{d'}(\bx,\bs)$ can differ if $d \neq d'$. 

Before we proceed, let us comment on the key 
difference between the current model and that of our 
earlier work~\cite{La_TNSE}, which only considers 
{\em neutral} dependence graphs with no degree correlations. 
When the underlying dependence graph is neutral, the 
degree distribution of neighbors does not depend on the
degree of the agent under consideration and 
the risk exposure is identical for populations, 
i.e., $e_d(\bx, \bs) = e_{d'}(\bx, \bs)$ for all 
$d, d' \in \cD$. 
As a result, both the model and the analysis become 
much simpler. 

We assume that the costs of an agent due to multiple 
infections are additive. Hence, the expected cost of an 
agent with degree $d$ from indirect attacks is 
proportional to its degree and RE $e_d(\bx, \bs)$. 
\change{The additivity of costs is reasonable in many 
scenarios, 
including the earlier example of malware propagation; 
each time a user is infected by different malware
(e.g., ransomware)
or its ID is stolen, the user will need to spend time 
and incur expenses to deal with the problem.  
Similarly, every time a corporate network
is breached, besides any financial losses or legal
expenses, the network operator will need to assess
the damages and take corrective measures.} 

Based on this assumption, we adopt the following cost function 
for our population game: for a given social state ${\bf x} 
\in \cX$, the cost of an agent with degree $d$ 
investing $a$ in security is equal to 
\beqa
\myhb C({\bf x}, d, a, \bs) 
\myeq \left( \tau_A + d \cdot e_d({\bf x}, \bs) \right) L(a) 
	+ a. 
	\label{eq:Cost}
\eeqa
Note that $\tau_A + d \cdot e_d(\bx, \bs)$ is the 
total number of both direct and indirect attacks an agent  
of degree $d$ expects. Hence, the first term on the 
right-hand side of (\ref{eq:Cost}) is the total expected 
loss due to infections. 

From now on, we take the viewpoint that the 
agents use their expected number of attacks given 
by $\tau_A + d \cdot e_d(\bx, \bs)$ as their perceived
security risks at social state $\bx$. Based on these
observed risks, they decide their security investments 
to minimize their cost given in (\ref{eq:Cost}). 
\myskip

$\bullet$ {\bf Nash equilibria --}
We focus on the NEs of population games as an 
approximation to agents' behavior in practice. 
For every $d \in \cD$, define a mapping $\cI^{\opt}_d: \cX
\to \mathcal{B}(\cI)$, where $\mathcal{B}(\cI)$ is the 
set of (Borel) subsets of $\cI$ and 
\beqan
\cI^{\opt}_d(\bx) 
:= \left\{ a \in \cI \ \big| \ C(\bx, d , a, \bs) 
	= \inf_{a' \in \cI} C(\bx, d, a', \bs) \right\}. 
\eeqan

\begin{defn}
A social state ${\bf x}^\star$ is an NE if  
$\bx^\star_d(\cI^{\opt}_d(\bx^\star)) = 1$ for all
$d \in \cD$. 
\myskip
\end{defn}

Clearly, our model does not require that all agents from a
population adopt the same action in general. However, 
we are often interested in cases in 
which the social state is degenerate, i.e., all agents with 
the same degree adopt the same action. In this case, we denote
the action chosen by population $d \in \cD$ by $a_d$, and 
refer to $\ba := (a_d; \ d \in \cD) \in \cI^{D_{\max}}$ as 
a {\em pure strategy profile}. 

With a little abuse of notation, we denote the RE of 
population $d \in \cD$ when a 
pure strategy profile $\ba$ is employed by 
$e_d(\ba, \bs)$. 
\myskip

\begin{defn}	\label{defn:PSNE}
A pure strategy profile $\ba^\star \in \cI^{D_{\max}}$ is said
to be a pure-strategy NE if, for all 
$d \in \cD$,
\beqan 
a^\star_d \in \arg \min_{a \in \cI} \big( (\tau_A + d \cdot 
	e_d(\ba^\star, \bs)) L(a) + a \big). 
\eeqan
In other words, every agent in a population adopts the 
same best response. 
\end{defn}

\section{Average risk exposure and the effects
	of network mixing}
	\label{sec:ARE}

In this section, we first define the security 
metric we adopt to measure the (global) network 
security, namely ARE, and describe how we 
estimate it. 
Then, we lay out how we model the {\em 
the influence of degree correlations} on the 
average security risks experienced by agents of
varying degrees (measured by the expected
number of attacks) via the REs $e_d(\bx, 
\bs)$, $d \in \cD$.

\subsection{Average risk exposure} 
	\label{subsec:RE-ARE}

As mentioned in Section~\ref{sec:Introduction}, 
we use a metric we call ARE to measure and 
compare the network security as we study the 
impact of degree correlations. The ARE is defined 
to be the (expected) {\em total number of 
indirect attacks} experienced by all agents 
divided by the number of directed edges in the
dependence graph. Since $e_d(\bx, \bs)$ is the
number of indirect attacks an agent of degree $d$
expects from a single neighbor at social state
$\bx$, its expected total number of indirect 
attacks is $d \cdot e_d(\bx, \bs)$. Therefore, 
the expected total number of indirect attacks in
the network is equal to $\sum_{d \in \cD} \left( 
s_d \times d \cdot e_d(\bx, \bs) \right)$, and
the ARE is given by 
\beqa 
\myhb e_{\avg}(\bx, \bs) 
\myeq \frac{ \sum_{d \in \cD}  s_d \cdot d \cdot 
	e_d(\bx, \bs)}{ \sum_{d \in \cD} s_d \cdot d} 
	\label{eq:ARE0} \\
\myeq \sum_{d \in \cD} w_d(\bs) \cdot e_d(\bx, \bs), 
	\label{eq:ARE}
\eeqa
where $w_d(\bs) := d \cdot f_d(\bs) / d_{\avg}(\bs)$, 
$d \in \cD$.

Since it is by definition proportional 
to the expected {\em total number of indirect 
attacks} in the network (for fixed degrees in 
the network), the ARE can 
be considered a {\em global} metric for network 
security which measures the {\em aggregate security 
risks to all 
agents} in the form of attacks from neighbors.
In the rest of the paper, we take this viewpoint and 
study how network mixing influences the network 
security measured by ARE.

While the definition of ARE is simple
and intuitive, it does not provide a means of 
computing ARE unless we already know the REs
$e_d(\bx, \bs)$ for all populations. 
Therefore, we need a way to estimate it.
Unfortunately, computing the ARE exactly starting 
with its definition suffers from several major
technical difficulties; it depends on the 
{\em detailed properties} of both the dependence 
graph and the dynamics of infection propagation 
among agents. Modeling these accurately is 
difficult, if possible at all. More importantly, 
such detailed models in general do not yield
to mathematical analysis. For these reasons, we 
seek to approximate ARE. 
\myskip

\subsubsection{Approximation of ARE}

In order to approximate the ARE (and the REs), 
we base our model on the following observation:
all {\em indirect} attacks begin with the {\em first-hop}
indirect attacks on the immediate neighbors by {\em the 
victims of successful direct attacks}. Thus, it 
is reasonable to assume that the {\em total} number
of indirect attacks in the network 
increases with the number of
the {\em first-hop} indirect attacks, each of which 
can initiate a chain of indirect attacks thereafter.

Let 
\beqan
p_{d, \avg}(\bx) := \int_{\cI}
	p(a) \ \bx_{d}(da)
\eeqan
be the probability that a randomly selected agent 
of degree $d$ will suffer an infection from a
{\em single} attack at social state $\bx$. 
The expected number of agents with degree $d$ 
which will fall
victims to {\em direct} attacks is 
$\tau_A \cdot 
s_d \cdot p_{d, \avg}({\bf x})$, and each infected
agent of degree $d$ will attempt to transmit the
infection to each of its $d$ neighbors with IAP
$\beta_{IA}$. Thus,  
the expected total number of first-hop
indirect attacks by the victims infected by
direct attacks is equal to $\tau_A \cdot 
\beta_{IA} \ 
\sum_{d \in {\cal D}} \left( d \cdot s_d 
\cdot p_{d, \avg}({\bf x}) \right)$.

Based on this argument, we approximate the 
ARE as a strictly increasing function of $\tau_A
\cdot \beta_{IA} 
\ \sum_{d \in \cD} (d \cdot s_d \cdot 
p_{d, \avg}(\bx))$. But, 
we first normalize it by 
the total population size 
and work with the expected number of
first-hop indirect attacks {\em per agent}, i.e., 
$\tau_A \cdot \beta_{IA} \
\sum_{d \in {\cal D}} \left( d \cdot s_d 
\cdot p_{d, \avg}({\bf x}) \right) / 
\sum_{d \in \cD} s_d = \tau_A \cdot 
\beta_{IA} \ \sum_{d \in {\cal D}} 
\left( d \cdot f_d(\bs) \cdot p_{d, \avg}({\bf x}) 
\right)$, where the equality follows from the 
definition of $\bff(\bs)$. In summary, we estimate 
the ARE using
\beqa
e_{\avg}(\bx, \bs)
\myeq \Theta\left( \sum_{d \in \cD} d \ f_d(\bs) 
	\ p_{d, \avg}(\bx) \right)
	\label{eq:ARE1}
\eeqa
for some strictly increasing function $\Theta: 
\R_+ \to \R_+$, which we assume factors in both 
$\tau_A$ and $\beta_{IA}$. A simple example of function
$\Theta$ is a linear function, i.e.,  
\beqa
e_{\avg}(\bx, \bs)
\myeq K \sum_{d \in \cD} d \ f_d(\bs) \ 
	p_{d, \avg}(\bx)
		\label{eq:ARE2}
\eeqa
for some $K > 0$.
The exact form of the function $\Theta$ will depend 
on many factors, including the detailed dynamics
of infection propagation, direct attack probability 
$\tau_A$, IAP $\beta_{IA}$, 
and the timeliness of deployed remedies (e.g., security 
patches) to stop the spread of infection.

We impose the following assumption on the function $\Theta$.
\myskip

\begin{assm}	\label{assm:theta}
The function $\Theta: \R_+ \to \R_+$ is continuous and 
strictly increasing. Furthermore, it is continuously 
differentiable over $\R_{++} := (0, \infty)$. 
\myskip
\end{assm}
The first part of the assumption is natural as argued above. 
While the latter part (i.e., continuous differentiability) 
is introduced for convenience to facilitate our analysis, 
we feel that it is reasonable; \change{recall that the
ARE is proportional to the expected total number
of indirect attacks in the network, and multi-hop 
indirect attacks can be viewed as offsprings of one-hop
indirect attacks, starting with the victims of direct
attacks. For this reason, 
in practice, we expect the average security risk 
measured by ARE to be a `smooth' 
function of the expected number of one-hop indirect 
attacks.} 
\myskip

\subsubsection{Alternate expression of ARE} 

Before we proceed, let us provide an alternate 
expression of ARE, which helps us highlight two distinct
sources that influence the ARE and isolate the one of
interest to us. 
To this end, let us define a 
mapping $\Phi: \R_+^{D_{\max}} 
\times \R_+ \to \R_+$ with $\Phi(\bs, r) = 
\Theta(d_{\avg}(\bs) \ r)$. From the definition 
of $\bw(\bs)$, we have the following equality.  
\beqan
\sum_{d \in \cD} d \ f_d(\bs) \ p_{d, \avg}(\bx)
\myeq d_{\avg}(\bs) \sum_{d \in \cD} w_d(\bs) 
	\ p_{d, \avg}(\bx)
\eeqan
As explained in \cite{La_TON, La_TNSE}, $\sum_{d \in 
\cD} w_d(\bs) \ p_{d, \avg}(\bx)$ is the probability 
that an end node of a randomly selected edge in the 
dependence graph is vulnerable to an attack,\footnote{This 
sampling technique is called {\em sampling by random 
edge selection} \cite{LeskovecFaloutsos}.} i.e., 
it becomes infected when attacked, at social
state $\bx$. Therefore, it captures on the average
how vulnerable neighboring agents are to indirect attacks 
and, hence, serves as an indicator of 
how easily an infection might transmit from one 
agent to another.

Using the definition of the mapping $\Phi$, the 
ARE can be rewritten as
\beqa
e_{\avg}(\bx, \bs) 
\myeq \Theta\left( d_{\avg}(\bs) \sum_{d \in \cD} 
	w_d(\bs) \ p_{d, \avg}(\bx) \right)
		\label{eq:ARE3} \\
\myeq \Phi\left(\bs, 
\sum_{d \in \cD} w_d(\bs) \ p_{d, \avg}(\bx) \right).
	\nonumber
\eeqa
From (\ref{eq:ARE3}), it is obvious that the ARE
depends on two measures that capture the ease with 
which an infection can spread through the network: 
(a) the average degree of agents, $d_{\avg}(\bs)$, 
indicates on the average how many other agents an 
infected agent could potentially infect, and (b)  
$\sum_{d \in \cD} w_d(\bs) \ p_{d, \avg}(\bx)$
tells us how vulnerable neighboring agents are in 
general. 

The first argument of $\Phi$ depends only on 
the dependence graph and is beyond the control of 
agents. Moreover, in our study, we assume that the 
population sizes $\bs$, hence the 
average degree $d_{\avg}(\bs)$, are {\em fixed} and 
study the influence of degree correlations. On the 
other hand, the second argument is a function of the 
social state $\bx$ chosen by agents. Thus, it 
incorporates the effects of degree correlations that 
induce {\em heterogeneous} REs seen by agents of 
varying degrees and, as a result, alter the
equilibrium ARE (and REs) by affecting their
security investments.

\subsection{The effects of network mixing}

The expression in (\ref{eq:ARE0}) tells us
how the REs shape the ARE. Another way of putting 
this is that, once the agents choose the social state
$\bx$ and the REs are fixed for all populations, we can 
compute the ARE and then {\em infer} the relations 
between the ARE $e_{\avg}(\bx, \bs)$ and individual 
REs $e_d(\bx,\bs)$, $d \in \cD$. These relations 
reveal {\em how the underlying degree correlations 
bias the REs} at the social state $\bx$ 
(relative to a 
neutral dependence graph under which $e_d(\bx, \bs) 
= e_{\avg}(\bx, \bs)$ for all $d \in \cD$ and 
all $\bx \in \cX$). Therefore, they summarize the 
net effects of degree correlations on security risks
experienced by agents based on their degrees.

In this paper, we assume that these relations 
are approximately linear. 
In other words, for every $d \in \cD$, there exists
some $g_d > 0$ such that $e_d(\bx, \bs) 
= g_d \cdot e_{\avg}(\bx, \bs)$ 
for all $\bx \in \cX$. The case with 
$g_d = 1$ for all $d \in \cD$ corresponds to 
the {\em neutral} dependence graph because $e_d(\bx, 
\bs) = e_{\avg}(\bx, \bs)$ for all $d \in \cD$, 
and agents see similar
risks from their neighbors regardless of their 
own degrees.

Obviously, this is a simplifying 
assumption and might not hold in practice. 
However, we feel that it is a reasonable first-order
approximation for {\em local} analysis around neutral 
dependence graphs at the NEs, 
which is the main focus of this paper 
(Theorem~\ref{thm:1} in Section~\ref{sec:MainResults}), 
and allows us to tackle otherwise a very difficult 
problem of understanding how different REs experienced 
by agents with varying degrees shape their
security investments and resulting network security. 

We refer to ${\bf g} := (g_d; d \in \cD)$ as 
a {\em mixing vector}. It models a {\em 
bias} or {\em skewness} in the average risk posed by 
neighbors to agents with varying degrees, which is 
caused by degree correlations. However, it does not 
correspond to any existing measure of assortativity, 
such as assortativity coefficient (which is Pearson
correlation coefficient). 
In this sense, we are
primarily concerned with capturing the {\em net effects 
of degree correlations} seen by agents with different
degrees, without having to worry about accurate 
modeling or measuring of assortativity itself.
 
For example, suppose that (i) agents with smaller degrees 
do not have a strong incentive to invest in security and 
fall victim to attacks more often than those with
larger degrees and (ii) the dependence graph exhibits
disassortativity (hence, agents with high degrees are 
more likely to be connected to agents of small degrees). 
Then, $g_d$ would be greater than one for large $d$ 
because they would see larger risks from their neighbors
with small degrees. Similarly, $g_d$ would be  
less than one for small $d$ because agents with small
degrees would be more likely to 
have neighbors with large degrees, which 
would pose lower risks.

\section{Preliminaries}
	\label{sec:Preliminaries}

From (\ref{eq:Cost}) and (\ref{eq:ARE1}), for a fixed 
mixing vector ${\bf g}$, the cost function is identical 
for two population size vectors $\bs^1$ and $\bs^2$ 
with the same node degree distribution, 
i.e., ${\bf f}(\bs^1) = {\bf f}(\bs^2)$. 
This scale invariance property of the cost function 
implies that the set of NEs is identical for both 
population size vectors. As a result, it suffices to 
study the NEs for population size vectors whose sum is 
equal to one, i.e., $\sum_{d \in \cD} s_d = 1$. For 
this reason, without loss of generality we impose the
following assumption in the remainder of the paper. 
\myskip

\begin{assm} 	\label{assm:0}
The population size vectors are normalized so that
the total population size is equal to one. 
\myskip
\end{assm}
Keep in mind that, under Assumption~\ref{assm:0}, the 
node degree distribution ${\bf f}(\bs)$ is equal to the
population size vector ${\bf s}$, i.e., ${\bf f}(\bs) = \bs$. 

Let us discuss a few observations that will help us prove the 
main results. 
\myskip

\noindent $\bullet$ 
{\bf Infection probability at optimal investments:}
For each $r \in \R_+$, let $I^{\opt}(r)$
be the set of optimal investments for an agent when
its security risk (measured by 
{\em the number of attacks it expects}) is $r$. In 
other words, 
\beqan
I^{\opt}(r) 
\myeq \arg \min_{a \in \cI} \big( r \ L(a)  
	+ a \big). 
\eeqan
Under Assumption~\ref{assm:pa}, one can show that the 
optimal investment is unique, i.e., $I^{\opt}(r)$ is a 
singleton for all $r \in \R_+$. Hence, we can view 
$I^{\opt} : \R_+ \to \cI$ 
as a mapping that tells us the optimal investment that will
be chosen by an agent as a function of the number of attacks 
it expects. This in turn implies that, at an NE $\bx^\star$, 
the population 
state $\bx^\star_d$ is concentrated on a single point, i.e., 
$\bx^\star_d( \{I^{\opt}( \tau_A + d  \cdot e_d(\bx^\star, 
\bs) )\} ) = 1$ for all $d \in \cD$. 

Define 
\beqan
r_{\min} 
= \sup\{ r \in \R_+ \ | \ I^{\opt}(r) = I_{\min} \}
\eeqan
and 
\beqan
r_{\max}
= \inf\{ r \in \R_+ \ | \ I^{\opt}(r) = I_{\max} \}. 
\eeqan
From their definitions, $r_{\min}$ (resp. $r_{\max})$
is the largest number of attacks (resp. the smallest
number of attacks) experienced by an agent, for which
the optimal investment is $I_{\min}$ (resp. $I_{\max}$).
Then, $I^{\opt}(r)$ is nondecreasing in $r$. Moreover, 
it is strictly increasing over [$r_{\min}, r_{\max}]$. 

Let the mapping $p^\star: \R_+ \to [0, 1]$ be the 
composition of $p: \cI \to [0, 1]$ and $I^{\opt}: \R_+ 
\to \cI$, i.e., $p^\star(r) = p\left( I^{\opt}(r) \right)$. 
The following corollary is an immediate consequence 
of (i) the assumption that $p$ is decreasing and (ii) the 
earlier observation that $I^{\opt}$ is nondecreasing (and 
strictly increasing over [$r_{\min}, r_{\max}$]). 
\myskip

\begin{coro}	\label{coro:OptInfProb}
The mapping $p^\star$ is nonincreasing. Furthermore, 
it is strictly decreasing over $[r_{\min}, r_{\max}]$.
\myskip
\end{coro}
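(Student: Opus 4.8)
The plan is to treat $p^\star = p \circ I^{\opt}$ purely as a composition of monotone maps and to read off both assertions directly from the monotonicity of the two factors, both of which are already established in the excerpt. First I would dispatch the weak claim. Take any $r_1 \le r_2$ in $\R_+$. Since $I^{\opt}$ is nondecreasing, $I^{\opt}(r_1) \le I^{\opt}(r_2)$; since $p$ is decreasing, applying $p$ reverses the ordering and yields $p(I^{\opt}(r_1)) \ge p(I^{\opt}(r_2))$, i.e. $p^\star(r_1) \ge p^\star(r_2)$. Hence $p^\star$ is nonincreasing on all of $\R_+$, with no additional input required.

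For the strict claim on $[r_{\min}, r_{\max}]$, I would invoke the \emph{strict} monotonicity of both factors over the relevant range. Fix $r_{\min} \le r_1 < r_2 \le r_{\max}$. By the fact, recorded just before the corollary, that $I^{\opt}$ is strictly increasing on $[r_{\min}, r_{\max}]$, we obtain $I^{\opt}(r_1) < I^{\opt}(r_2)$, with both values lying in $\cI$. Strict monotonicity of $p$ then gives $p(I^{\opt}(r_1)) > p(I^{\opt}(r_2))$, i.e. $p^\star(r_1) > p^\star(r_2)$, which is exactly strict decrease on the interval.

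The one point meriting care, and the only place where Assumption~\ref{assm:pa} does work beyond the bare word ``decreasing'', is the justification that $p$ is in fact \emph{strictly} decreasing, so that a strict increase of the argument produces a strict decrease of the value. I would derive this from strict convexity: if $p$ failed to be strictly decreasing, there would exist $a_1 < a_2$ in $\cI$ with $p(a_1) = p(a_2)$ (equality rather than ``$<$'', since $p$ is decreasing), and then for the midpoint $m = (a_1 + a_2)/2$ strict convexity would force $p(m) < p(a_1)$ while monotonicity (as $m < a_2$) forces $p(m) \ge p(a_2) = p(a_1)$, a contradiction. This confirms that $p$ is strictly decreasing on $\cI$, after which the composition argument above is immediate. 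Beyond this observation there is essentially no analytic obstacle; the content of the corollary is entirely order-theoretic, consistent with its being flagged as an immediate consequence of the preceding monotonicity facts.
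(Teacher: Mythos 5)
Your proposal is correct and follows essentially the same route as the paper, which simply notes that the corollary is an immediate consequence of $p$ being decreasing and $I^{\opt}$ being nondecreasing (strictly increasing on $[r_{\min}, r_{\max}]$), i.e., the composition argument you spell out. Your extra midpoint argument deriving strict decrease of $p$ from strict convexity is a sound and worthwhile clarification of the word ``decreasing'' in Assumption~\ref{assm:pa}, but it does not change the substance of the argument.
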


\noindent {\bf Example: } We provide an 
example to illustrate this. Suppose that $\tilde{C}: 
\R_+ \times \cI \to \R$, where $\tilde{C}(r, a) = r 
\ L(a) + a = r \ L \ p(a) + a$ and $p(a) = \exp(- \xi \ a)$
for some $\xi > 0$. Clearly, $\tilde{C}$ is a mapping that
tells us the cost of an agent  seeing $r$ attacks as a 
function of its security investment.  Fix $r \in \R_+$ and 
differentiate $\tilde{C}(r, a)$ with respect to $a$. 
\beqan
\frac{\partial \tilde{C}(r, a)}{\partial a}
\myeq r \ L \ p'(a) + 1 \lb
\myeq - r \ L \ \xi \ \exp(- \xi \ a) + 1
\eeqan
This yields $I^{\opt}(r) = \min(I_{\max}, \max(I_{\min}, 
\log(r \ L \ \xi) / \xi))$, $r \in \R_+$. It is obvious 
that $I^{\opt}$ is nondecreasing in 
$r$. Also, $r_{\min} = \exp(I_{\min} \ \xi) 
/ (L \ \xi)$ and $r_{\max} = \exp(I_{\max} \ \xi) 
/ (L \ \xi)$. Substituting these expressions in the
given functions, we obtain
\beqan
p^\star(r) 
\myeq p(I^{\opt}(r))
	= \big( \tilde{R}(r) \ L \ \xi \big)^{-1},  
\eeqan
where $\tilde{R}(r) = \min(r_{\max}, \max(r_{\min}, r))$. 
Thus, the infection probability at the optimal investment
is decreasing in the expected number of attacks $r$. 
\myskip

\noindent $\bullet$
{\bf The existence of pure-strategy Nash equilibrium: }
Let $\Delta_n$,
$n \in \N$, denote the probability simplex in $\R^n$. 
The following lemma establishes the existence of 
{\em pure-strategy} NEs of population games.
In order to improve readability, we defer the proofs
of all main results to Section~\ref{sec:Proofs}, which
can be skipped without causing confusion elsewhere.
\myskip

\begin{lemma}	\label{lemma:existence}
For every pair of population size vector $\bs \in 
\Delta_{D_{\max}}$ and mixing vector $\bg \in 
\R_+^{D_{\max}}$, there exists a pure-strategy NE 
of the corresponding population game.
\end{lemma}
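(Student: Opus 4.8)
The plan is to recast the existence of a pure-strategy NE as a fixed-point problem and invoke Brouwer's theorem on the compact convex box $\cI^{D_{\max}} = [I_{\min}, I_{\max}]^{D_{\max}}$. By Definition~\ref{defn:PSNE} and the uniqueness of the optimal investment noted in the preliminaries, a pure strategy profile $\ba^\star$ is a pure-strategy NE precisely when $a^\star_d = I^{\opt}(\tau_A + d\cdot e_d(\ba^\star,\bs))$ for every $d \in \cD$. For a degenerate social state we have $p_{d,\avg} = p(a_d)$, so combining $e_d = g_d\, e_{\avg}$ with the ARE expression (\ref{eq:ARE1}) shows the right-hand side depends on $\ba^\star$ only through the scalar $\sum_{d'\in\cD} d'\, f_{d'}(\bs)\, p(a^\star_{d'})$. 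Accordingly I would define $T:\cI^{D_{\max}}\to\cI^{D_{\max}}$ by
\[
T(\ba)_d := I^{\opt}\!\Big(\tau_A + d\, g_d\, \Theta\big(\textstyle\sum_{d'\in\cD} d'\, f_{d'}(\bs)\, p(a_{d'})\big)\Big),
\]
and observe that $T$ maps into $\cI^{D_{\max}}$ (its argument lies in $\R_+$ and $I^{\opt}$ is $\cI$-valued) and that fixed points of $T$ are exactly pure-strategy NEs.

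The hard part will be verifying that $T$ is continuous, which reduces to continuity of the map $I^{\opt}:\R_+\to\cI$. I would obtain this from strict convexity. For $r>0$ the objective $a\mapsto r\,L(a)+a = r\,L\,p(a)+a$ is jointly continuous in $(r,a)$ and strictly convex in $a$ by Assumption~\ref{assm:pa}, so its minimizer over the compact set $\cI$ is unique; Berge's maximum theorem then makes the argmin a continuous single-valued function of $r$. The value $r=0$ is handled separately since the objective is then $a\mapsto a$, uniquely minimized at $I_{\min}$, with continuity across $r=0$ checked directly. Alternatively, one can argue concretely: the interior first-order condition $p'(a) = -1/(rL)$ together with strict monotonicity of $p'$ (from strict convexity and continuous differentiability on $\mathrm{int}(\cI)$) presents $I^{\opt}$ on $[r_{\min}, r_{\max}]$ as a continuous inverse of $p'$, glued continuously to the constants $I_{\min}$ and $I_{\max}$ outside this interval.

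Granting continuity of $I^{\opt}$, continuity of $T$ follows by composition: $p$ is continuous (Assumption~\ref{assm:pa}), so $\ba \mapsto \sum_{d'} d'\, f_{d'}(\bs)\, p(a_{d'})$ is continuous; $\Theta$ is continuous (Assumption~\ref{assm:theta}); and the affine scaling by $d\,g_d$ and shift by $\tau_A$ preserve continuity. Since $\cI^{D_{\max}}$ is compact and convex, Brouwer's fixed-point theorem delivers $\ba^\star = T(\ba^\star)$, the desired equilibrium. The main obstacle is indeed the continuity of $I^{\opt}$: this is exactly where strict convexity in Assumption~\ref{assm:pa} is indispensable, since it yields a unique and hence continuous best response. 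Without it the best-response correspondence would be merely upper hemicontinuous, forcing a Kakutani-type argument and, more importantly, threatening the existence of a \emph{pure}-strategy equilibrium.
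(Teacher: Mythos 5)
Your proposal is correct and follows essentially the same route as the paper: both define the best-response map $\ba \mapsto \big(I^{\opt}(\tau_A + d\, e_d(\ba,\bs));\ d \in \cD\big)$ on the compact, convex box $\cI^{D_{\max}}$ and apply Brouwer's fixed-point theorem, identifying fixed points with pure-strategy NEs. The only difference is that you spell out the continuity of $I^{\opt}$ (via Berge's theorem or inversion of the first-order condition) and the explicit form of $e_d$ through $\bg$ and $\Theta$, details the paper's proof asserts without elaboration.
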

\begin{proof}
A proof is provided in Section~\ref{appen:existence}. 
\end{proof}

From an earlier discussion, under Assumption~\ref{assm:pa}, any 
NE of a population game, say $\bx^\star$, is a pure-strategy
NE. In other words, there exists 
a pure strategy profile $\ba^\star$ such that 
$\bx^\star(\{\ba^\star\}) = 1$. This is because, once the REs
$e_d(\bx^\star, \bs), d \in \cD,$ are fixed at the NE, every 
population has a unique optimal investment that minimizes 
its cost given by (\ref{eq:Cost}).

\section{Main Analytical Results}
	\label{sec:MainResults}

In the previous section, we established the existence
of a pure-strategy NE. But, when there are more
than one NE, it is not always obvious which NE is more
likely to emerge in practice, and one often has to turn 
to equilibrium selection theory in order to identify more 
likely NEs. If this were the case for our problem, 
it would be difficult to compare how the overall 
security would be affected by the varying degree 
correlations of the underlying dependence graph. 

Our first result addresses this issue and establishes 
the {\em uniqueness} of
pure-strategy NE of a population game. Thus, it allows 
us to compare the network security
at NEs as system parameters change. 
\myskip

\begin{theorem} \label{thm:unique}
Given a population size vector $\bs \in \Delta_{D_{\max}}$ 
and a mixing vector ${\bf g} \in \R_+^{D_{\max}}$, 
there exists a 
unique pure-strategy NE of the population game. 
\end{theorem}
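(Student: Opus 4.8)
The plan is to collapse the $D_{\max}$-dimensional equilibrium condition to a scalar fixed-point equation and then close it by a monotonicity argument. The key observation is that, under the linear mixing assumption $e_d(\ba, \bs) = g_d \, e_{\avg}(\ba, \bs)$ together with the approximation $e_{\avg}(\ba, \bs) = \Theta\big( \sum_{d \in \cD} d \, f_d(\bs) \, p(a_d) \big)$, the coupling among all populations factors through the single scalar $y := \sum_{d \in \cD} d \, f_d(\bs) \, p(a_d)$. Indeed, the number of attacks seen by population $d$ at a pure strategy profile $\ba$ is $r_d = \tau_A + d \, g_d \, \Theta(y)$, so by Assumption~\ref{assm:pa} (which makes $I^{\opt}$ single-valued) the unique best response of population $d$ depends on $\ba$ only through $y$, namely $I^{\opt}(r_d)$.

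First I would make this reduction precise. A pure strategy profile $\ba^\star$ is a pure-strategy NE if and only if, writing $y^\star = \sum_{d \in \cD} d \, f_d(\bs) \, p(a^\star_d)$, we have $a^\star_d = I^{\opt}\big(\tau_A + d \, g_d \, \Theta(y^\star)\big)$ for every $d \in \cD$. Substituting $p(a^\star_d) = p^\star\big(\tau_A + d \, g_d \, \Theta(y^\star)\big)$ into the definition of $y^\star$ shows that $y^\star$ must be a fixed point of
\[
H(y) := \sum_{d \in \cD} d \, f_d(\bs) \, p^\star\big( \tau_A + d \, g_d \, \Theta(y) \big),
\]
and conversely any fixed point $y^\star$ of $H$ recovers an NE through $a^\star_d = I^{\opt}\big(\tau_A + d \, g_d \, \Theta(y^\star)\big)$. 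Thus pure-strategy NEs are in one-to-one correspondence with fixed points of $H$ on the compact interval $[\,d_{\avg}(\bs)\,p(I_{\max}),\ d_{\avg}(\bs)\,p(I_{\min})\,]$, and it suffices to prove that $H$ has exactly one fixed point.

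Next I would verify that $H$ is continuous and nonincreasing. Continuity follows since $\Theta$ is continuous (Assumption~\ref{assm:theta}), $p$ is continuous (Assumption~\ref{assm:pa}), and $I^{\opt}$ is continuous (the strictly convex, differentiable objective $r\,L\,p(a) + a$ has a minimizer that varies continuously in $r$ and is clamped continuously to the endpoints of $\cI$), so that $p^\star = p \circ I^{\opt}$ is continuous. Monotonicity is the crux: as $y$ increases, $\Theta(y)$ increases because $\Theta$ is strictly increasing, hence each argument $\tau_A + d \, g_d \, \Theta(y)$ increases; since $p^\star$ is nonincreasing by Corollary~\ref{coro:OptInfProb} and each coefficient $d \, f_d(\bs) > 0$ with $g_d > 0$, the finite sum $H$ is nonincreasing.

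Finally I would conclude by a standard argument. Setting $\phi(y) := y - H(y)$, the map $y \mapsto y$ is strictly increasing and $-H$ is nondecreasing, so $\phi$ is strictly increasing, hence injective, and therefore has at most one zero; equivalently, $H$ has at most one fixed point. Combined with Lemma~\ref{lemma:existence}, which supplies at least one pure-strategy NE and hence at least one fixed point of $H$, this yields a unique fixed point $y^\star$, and the associated profile $a^\star_d = I^{\opt}\big(\tau_A + d \, g_d \, \Theta(y^\star)\big)$ is the unique pure-strategy NE. I expect the main obstacle to be the reduction step itself, i.e.\ confirming that the equilibrium truly lives on one dimension (which is precisely where the linear mixing assumption $e_d = g_d \, e_{\avg}$ is essential); the continuity of $I^{\opt}$ is a secondary technical point worth recording carefully, since it underwrites the continuity of $H$ needed to pass from existence of an NE to existence of a fixed point.
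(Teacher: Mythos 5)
Your proof is correct and takes essentially the same approach as the paper: the paper shows that any two pure-strategy NEs must have equal ARE via a contradiction that uses exactly the monotonicity you isolate (a larger ARE forces more attacks, hence larger investments, hence smaller $p^\star$, hence a smaller ARE), and then concludes from the uniqueness of the best response $I^{\opt}$ given the fixed ARE. Your scalar fixed-point map $H$ in the variable $y$ (equivalently $e_{\avg} = \Theta(y)$, with $\Theta$ a strictly increasing bijection onto its range) is the same argument packaged as an explicit monotone fixed-point statement, with existence likewise imported from Lemma~\ref{lemma:existence}.
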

\begin{proof}
A proof is provided in Section~\ref{appen:unique}. 
\end{proof}

We denote the unique pure-strategy NE in Theorem
\ref{thm:unique} by $\ba^\star(\bs, \bg)$ hereinafter. 
Our main result on the influence of degree 
correlations on network security is stated 
in the following theorem: it tells us how the effects
of network mixing 
captured via the mixing vector ${\bf g}$ might
change the security investments of strategic agents
and ensuing network security 
{\em in the local neighborhood of a neutral 
dependence graph}.

To make progress, we assume that $p^\star$ satisfies
the following assumption. 
\myskip 

\begin{assm} \label{assm:pstar}
The product $r \cdot \dot{p}^\star(r)$ is strictly 
increasing over $[r_{\min}, r_{\max}]$. 
\myskip
\end{assm}

\noindent For example, this assumption is true when the
optimal infection probability $p^\star$ can be well 
approximated over the interval [$r_{\min}, r_{\max}]$ by 
(a) $p^\star(r) = \nu_1 / 
(r + \nu_2)^{\chi_1}$ with $\nu_1, \chi_1 > 0$ and 
$0 \leq \nu_2 \leq r_{\min} / \chi_1$ 
or (b) $p^\star(r) = \nu_3 / 
\left( \log(r + \nu_4) \right)^{\chi_2}$ with $\nu_3, 
\chi_2 > 0$ and $\nu_4 \geq 1$ satisfying
\beqan
\frac{r_{\min}}{\log(r_{\min} + \nu_4)}
\geq \frac{\nu_4}{\chi_2 + 1}.
\eeqan 
Obviously, it holds when the optimal infection
probability can be approximated by a sum of these 
functions or other functions that satisfy the 
assumption. 

Let ${\bf 1}$ be the $D_{\max} \times 1$ vector
consisting of ones, i.e., ${\bf 1} = (1, \ldots ,
1)^T$. 
\myskip

\begin{theorem}	\label{thm:1}
Fix a population size vector $\bs \in \Delta_{D_{\max}}$
and assume that $\ba^\star(\bs, {\bf 1}) \in 
{\rm int}(\cI^{D_{\max}})$. 
Then, there exists an open, convex set ${\cal G}
\subset \R_+^{D_{\max}}$ containing ${\bf 1}$
such that if $\bg^i \in {\cal G}$, $i =$ 1, 2, are 
two mixing vectors satisfying
\beqa
\sum_{d'=1}^d w_{d'}(\bs) \ g^1_{d'}
\myleq \sum_{d'=1}^d w_{d'}(\bs)  \ g^2_{d'}
\ \mbox{ for all } \ d \in \cD,
	\label{eq:thm1}
\eeqa 
then $e_{\avg}(\ba^\star(\bs, \bg^2), \bs) 
\leq e_{\avg}(\ba^\star(\bs, \bg^1), \bs)$. 
Furthermore, if the inequality in (\ref{eq:thm1})
is strict for some $d \in \cD$, then 
$e_{\avg}(\ba^\star(\bs, \bg^2), \bs) 
< e_{\avg}(\ba^\star(\bs, \bg^1), \bs)$.
\end{theorem}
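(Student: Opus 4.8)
The plan is to collapse the whole problem to a single scalar self-consistency equation for the equilibrium ARE and then read off its comparative statics in $\bg$ by a summation-by-parts argument. Since Theorem~\ref{thm:unique} gives a unique pure-strategy NE $\ba^\star(\bs,\bg)$ for each $\bg$, the map $\bg\mapsto E^\star(\bg):=e_{\avg}(\ba^\star(\bs,\bg),\bs)$ is well defined; write $E=E^\star(\bg)$. At the NE every population $d$ plays $I^{\opt}(r_d)$ with $r_d=\tau_A+d\,g_d E$ (using the mixing model $e_d=g_d e_{\avg}$), so $p_{d,\avg}=p^\star(r_d)$ and, substituting into (\ref{eq:ARE3}),
\[
E \;=\; \Theta\Big(d_{\avg}(\bs)\,\textstyle\sum_{d\in\cD} w_d(\bs)\,p^\star(\tau_A + d\,g_d E)\Big)\;=:\;F(E,\bg).
\]
First I would show $E^\star$ is a well-behaved function of $\bg$ near ${\bf 1}$. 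Under Assumptions~\ref{assm:pa} and \ref{assm:theta} the right-hand side is $C^1$ wherever all $r_d\in(r_{\min},r_{\max})$ (there $p^\star=p\circ I^{\opt}$ is $C^1$ and $\Theta$ is $C^1$ on $\R_{++}$), and since every summand of $\partial_E F$ carries the factor $\dot p^\star\le 0$ by Corollary~\ref{coro:OptInfProb}, we have $1-\partial_E F\ge 1>0$; the implicit function theorem then yields a $C^1$ branch $E^\star(\bg)$. Combining this with the interior hypothesis $\ba^\star(\bs,{\bf 1})\in{\rm int}(\cI^{D_{\max}})$ and continuity, I would take $\cG$ to be a small open ball about ${\bf 1}$ (hence convex) on which the equilibrium stays interior.

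Next I would compare $E^\star(\bg^1)$ and $E^\star(\bg^2)$ along the segment $\bg(t)=(1-t)\bg^1+t\bg^2$, which lies in $\cG$ by convexity. Differentiating $E=F(E,\bg(t))$ in $t$ gives $(1-\partial_E F)\,\frac{dE}{dt}=\nabla_{\bg}F\cdot(\bg^2-\bg^1)$, and since every summand of $\nabla_\bg F$ also carries $\dot p^\star\le 0$ (with the extra factor $\partial r_d/\partial g_d=dE$) we obtain $\frac{dE}{dt}=c(t)\,T(t)$ with a strictly positive prefactor $c(t)=\Theta' d_{\avg}E/(1-\partial_E F)>0$ and
\[
T(t) \;=\; \textstyle\sum_{d\in\cD} d\,\dot p^\star(r_d)\,w_d(\bs)\,(g^2_d - g^1_d).
\]
Thus the sign of $\frac{dE}{dt}$ is the sign of $T(t)$, and it suffices to prove $T(t)\le 0$ for every $t$, since integrating then gives $E^\star(\bg^2)\le E^\star(\bg^1)$. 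Writing $\Delta_d:=w_d(\bs)(g^2_d-g^1_d)$ and $Q_d:=\sum_{d'\le d}\Delta_{d'}$, condition (\ref{eq:thm1}) is exactly $Q_d\ge 0$ for all $d\in\cD$, and summation by parts yields
\[
T(t) \;=\; a_{D_{\max}}Q_{D_{\max}} \;+\; \textstyle\sum_{d=1}^{D_{\max}-1}(a_d - a_{d+1})\,Q_d,\qquad a_d := d\,\dot p^\star(r_d).
\]
Because $a_{D_{\max}}=D_{\max}\dot p^\star(r_{D_{\max}})\le 0$ and $Q_{D_{\max}}\ge 0$, the boundary term is $\le 0$ (so no normalization such as $\sum_d w_d g_d=1$ is needed); the remaining terms are $\le 0$ provided the sequence $a_d$ is nondecreasing in $d$.

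The crux is therefore the monotonicity of $a_d=d\,\dot p^\star(r_d)$. At $\bg={\bf 1}$ one has $r_d=\tau_A+dE$, hence $d=(r_d-\tau_A)/E$ and $a_d=E^{-1}(r_d-\tau_A)\dot p^\star(r_d)$; since $r_d$ increases with $d$, the sequence $a_d$ is nondecreasing precisely when $r\mapsto (r-\tau_A)\dot p^\star(r)$ is nondecreasing on $(r_{\min},r_{\max})$. This is where Assumption~\ref{assm:pstar} enters: it makes $r\mapsto r\dot p^\star(r)$ strictly increasing, and the residual term $-\tau_A\dot p^\star$ is the delicate piece to control. I expect this step — checking that the factor $d$ (equivalently the shift by $\tau_A$) does not destroy the monotonicity supplied by Assumption~\ref{assm:pstar} — to be the main obstacle; the explicit families listed after the assumption are presumably intended to guarantee the monotonicity on the operative range of risks. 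Once the strict version of this monotonicity holds at ${\bf 1}$, continuity of $a_d$ in $\bg$ lets me shrink $\cG$ so that $a_1(t)\le\cdots\le a_{D_{\max}}(t)$ holds for all $t$ and all $\bg\in\cG$, giving $T(t)\le 0$ and the weak inequality. Finally, if (\ref{eq:thm1}) is strict for some $d^0$ then $Q_{d^0}>0$ pairs with a strictly signed coefficient (a strict gap $a_{d^0}<a_{d^0+1}$, or the strictly negative $a_{D_{\max}}$ when $d^0=D_{\max}$), forcing $T(t)<0$ throughout and hence $E^\star(\bg^2)<E^\star(\bg^1)$.
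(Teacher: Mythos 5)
Your setup coincides with the paper's: the paper also reduces the problem to the scalar root of $\vartheta(\bg,e)=\Phi\big(\sum_{d}w_d\,p^\star(\tau_A+d\,g_d\,e)\big)-e$ (your $F(E,\bg)-E$), checks $\partial\vartheta/\partial e<0$, and applies the implicit function theorem around ${\bf 1}$, identifying the branch with the equilibrium ARE via Theorem~\ref{thm:unique}. Where you genuinely diverge is the comparative-statics step: the paper proves a pairwise-transfer lemma (Lemma~\ref{lemma:2}: raising $g_{d_1}$ by $\delta$ while lowering $g_{d_2}$ by $\delta\,w_{d_1}/w_{d_2}$, $d_2<d_1$, strictly raises the equilibrium ARE) and then gives an explicit finite procedure converting $\bg^2$ into $\bg^1$ by such transfers, termination being guaranteed by the dominance condition (\ref{eq:thm1}); your homotopy $\bg(t)$ plus Abel summation replaces that discrete combinatorial construction with a single integration, handles the weak and strict statements uniformly, and is arguably tidier.

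However, your proof stops at exactly its self-identified crux, and that crux is not a side issue: the monotonicity of $a_d=d\,\dot{p}^\star(\tau_A+d\,g_d\,E)$ in $d$ is the entire substance of the theorem, and it is the same inequality the paper needs in (\ref{eq:2-6}), namely $d_1\,\dot{p}^\star(\tau_A+d_1 g_{d_1}e)-d_2\,\dot{p}^\star(\tau_A+d_2 g_{d_2}e)>0$. The paper disposes of it by asserting it follows from Assumption~\ref{assm:pstar} once $g_{d_1}\approx g_{d_2}\approx 1$; your suspicion about the shift by $\tau_A$ is exactly on target, because at $\bg={\bf 1}$ what is needed is that $(r-\tau_A)\,\dot{p}^\star(r)$ be increasing on the operative range of risks, whereas Assumption~\ref{assm:pstar} gives only that $r\,\dot{p}^\star(r)$ is increasing. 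The two differ by $-\tau_A\,\dot{p}^\star(r)$, which is \emph{decreasing} whenever $p^\star$ is convex; e.g., $p^\star(r)=\nu r^{-\chi}$ satisfies Assumption~\ref{assm:pstar} everywhere, yet $(r-\tau_A)\,\dot{p}^\star(r)$ is decreasing for $r<\tau_A(1+1/\chi)$, which intersects the operative range whenever the equilibrium ARE is small. Note also that shrinking $\cG$ cannot repair this, since the inequality is required at ${\bf 1}$ itself rather than being a perturbation issue, and the example families listed after the assumption do not automatically supply it either. So your gap is real, but it is precisely the step the paper glosses over: to finish at the paper's level of rigor you would invoke Assumption~\ref{assm:pstar} there exactly as the paper does, while a fully rigorous version of either proof needs the assumption restated in the shifted form ``$(r-\tau_A)\,\dot{p}^\star(r)$ strictly increasing on $[r_{\min},r_{\max}]$'' (or a hypothesis keeping the equilibrium risks in the region where that holds), under which your argument, strict part included, closes completely.
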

\begin{proof}
Please see
Section~\ref{appen:thm1} for a proof. 
\end{proof}

\change{A key idea in the proof of the theorem is 
the following: we construct a finite sequence 
of mixing vectors, starting with $\bg^2$ and ending 
with $\bg^1$. In each step, the RE experienced by 
agents in some population $d_1$ climbs while that 
of agents in another population $d_2 < d_1$ is
reduced proportionately. We show that this `transfer' 
of some of RE from agents with a smaller degree
($d_2$) to other agents with a larger degree ($d_1$) 
results in an increase in ARE at the unique 
pure-strategy NE. Moreover, we provide
a procedure for constructing such a sequence of
mixing vectors.}

From (\ref{eq:ARE}) and the definition of
mixing vector, an admissible mixing 
vector ${\bf g}$ must satisfy the following equality. 
\beqan
e_{\avg}(\bx, \bs)
\myeq \sum_{d \in \cD} \big( w_d(\bs) \cdot 
	e_d(\bx, \bs) \big) \lb
\myeq \sum_{d \in \cD} \big( w_d(\bs) \cdot g_d \cdot 
		e_{\avg}(\bx, \bs) \big) 
\eeqan
or, equivalently, $\sum_{d \in \cD} w_d(\bs) \cdot g_d = 1$. 
This implies that we can view ${\bf v}(\bs, \bg) = 
(v_d(\bs, \bg); \ d \in \cD)$ with $v_d(\bs, \bg) = w_d(\bs) 
\cdot g_d$ as a distribution over $\cD$.
When the inequality in (\ref{eq:thm1}) 
is strict for some $d \in \cD$ (i.e., $\bg^1 
\neq \bg^2$), it means that the distribution 
${\bf v}^1(\bs, \bg)$ first-order stochastically 
dominates ${\bf v}^2$(\bs, \bg)~\cite{SO}.\footnote{This
is equivalent to saying that a random variable with 
distribution ${\bf v}^1(\bs, \bg)$ is larger than a random
variable with distribution ${\bf v}^2(\bs, \bg)$ with 
respect to the usual stochastic order \cite{SO}.}
Hence, Theorem \ref{thm:1} states that the ARE increases 
as the distribution ${\bf v}(\bs, \bg)$ becomes 
(stochastically) larger. 

The following lemma provides a sufficient
condition for (\ref{eq:thm1}). 
\myskip

\begin{lemma}	\label{lemma:suff}
Suppose that two mixing vectors $\bg^1$ and 
$\bg^2$ satisfy
\beqa
\frac{g^1_d}{g^2_d} \leq \frac{g^1_{d+1}}{g^2_{d+1}}
\ \mbox{ for all } \ d = 1, 2, \ldots, D_{\max}-1.
	\label{eq:suff}
\eeqa
Then, the condition (\ref{eq:thm1}) in 
Theorem~\ref{thm:1} holds. 
\end{lemma}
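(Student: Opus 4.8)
The plan is to recast (\ref{eq:suff}) as a \emph{monotone likelihood ratio} condition between two probability distributions on $\cD$, and then invoke the elementary implication ``monotone likelihood ratio dominance $\Rightarrow$ first-order stochastic dominance,'' which is exactly (\ref{eq:thm1}). For $i = 1, 2$ set $v^i_d := w_d(\bs) \, g^i_d$, $d \in \cD$. Since $\bg^1$ and $\bg^2$ are admissible mixing vectors, the constraint $\sum_{d \in \cD} w_d(\bs) \, g^i_d = 1$ derived just after Theorem~\ref{thm:1} shows that each ${\bf v}^i := (v^i_d; \ d \in \cD)$ is a probability distribution on $\cD$. The left- and right-hand sides of (\ref{eq:thm1}) are then the partial sums $\sum_{d'=1}^d v^1_{d'}$ and $\sum_{d'=1}^d v^2_{d'}$, so the desired conclusion is precisely that the CDF of ${\bf v}^1$ lies pointwise below that of ${\bf v}^2$, i.e., that ${\bf v}^1$ first-order stochastically dominates ${\bf v}^2$.

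Next I would note that, because $w_d(\bs)$ cancels, $v^1_d / v^2_d = g^1_d / g^2_d =: r_d$, so hypothesis (\ref{eq:suff}) says exactly that the likelihood ratio $r_d$ is nondecreasing in $d$. The remaining task is a single-crossing argument. Writing $v^1_d - v^2_d = v^2_d (r_d - 1)$ and summing over $\cD$ gives $\sum_{d \in \cD} v^2_d (r_d - 1) = \sum_{d \in \cD} (v^1_d - v^2_d) = 0$, since both distributions have unit total mass. Hence the $v^2_d$-weighted mean of $(r_d - 1)$ vanishes, and since $r_d$ is nondecreasing there is a threshold index $d^\ast$ with $r_d \le 1$ for $d \le d^\ast$ and $r_d \ge 1$ for $d > d^\ast$.

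Finally I would verify (\ref{eq:thm1}) on each side of $d^\ast$. For $d \le d^\ast$,
\[
\sum_{d'=1}^d \big( v^2_{d'} - v^1_{d'} \big)
= \sum_{d'=1}^d v^2_{d'} \, (1 - r_{d'}) \ge 0 ,
\]
because every summand is nonnegative. For $d > d^\ast$, I would instead subtract the tail from the vanishing total sum,
\[
\sum_{d'=1}^d \big( v^2_{d'} - v^1_{d'} \big)
= - \sum_{d'=d+1}^{D_{\max}} v^2_{d'} \, (1 - r_{d'}) \ge 0 ,
\]
where each tail term $v^2_{d'}(1 - r_{d'})$ is nonpositive since $r_{d'} \ge 1$ for these $d' > d^\ast$, so the negated tail is nonnegative. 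In both cases $\sum_{d'=1}^d v^1_{d'} \le \sum_{d'=1}^d v^2_{d'}$, which is (\ref{eq:thm1}). I do not expect a genuine obstacle here: the only delicate point is the existence of the crossing index $d^\ast$, which could be blurred when some populations are empty (so that $v^2_d = 0$); such indices carry zero weight in every partial sum above and may be assigned to either side of the threshold without changing any inequality, so the single-crossing structure is preserved.
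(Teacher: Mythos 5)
Your proof is correct, and it shares the paper's high-level strategy: both arguments pass to the weighted vectors $v^i_d = w_d(\bs)\, g^i_d$, use the normalization $\sum_{d \in \cD} w_d(\bs)\, g^i_d = 1$ to view them as probability distributions, and read (\ref{eq:suff}) as monotonicity of the likelihood ratio, so that (\ref{eq:thm1}) becomes first-order stochastic dominance. Where you differ is in how the implication ``monotone likelihood ratio $\Rightarrow$ first-order stochastic dominance'' is established. The paper routes it through an auxiliary mediant-type inequality (its Lemma~\ref{lemma:aux}): if $b_\ell/a_\ell$ is nonincreasing, the ratio of partial sums $\sum_{\ell \le k} b_\ell / \sum_{\ell \le k} a_\ell$ dominates the ratio of the total sums; applied with $a_{d} = v^1_{d}$, $b_{d} = v^2_{d}$ and both total sums equal to one, this gives (\ref{eq:thm1}) in one line --- but the proof of that auxiliary lemma is omitted as ``straightforward.'' You instead give the classical single-crossing argument: the difference $v^1 - v^2$ has zero total mass, monotonicity of $r_d$ forces a single sign change at some $d^\ast$, and the partial sums are controlled termwise before the crossing and via the negated tail after it. The trade-off is that the paper's auxiliary lemma is a reusable general inequality (it does not need the normalization, which enters only at the end), whereas your argument is fully self-contained and effectively supplies the proof the paper leaves out. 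One small remark: your caveat about empty populations blurring the crossing index is moot --- defining $r_d := g^1_d/g^2_d$ (rather than $v^1_d/v^2_d$) makes $r_d$ well defined and nondecreasing by (\ref{eq:suff}) even when $w_d(\bs) = 0$, with $v^1_d = r_d\, v^2_d$ holding trivially at such indices; also, the existence of $d^\ast$ needs only monotonicity of $r_d$, not the vanishing weighted mean.
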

\begin{proof}
Please see Section~\ref{appen:suff} for a proof. 
\end{proof}

An interpretation of (\ref{eq:suff}) is
that agents experience comparatively greater REs
with increasing degrees under mixing 
vector $\bg^1$ compared to under mixing vector
$\bg^2$. Thus, 
Theorem \ref{thm:1} tells us that, when agents 
face higher risks from their 
neighbors with increasing degrees, 
the resulting ARE at the pure-strategy
NE climbs.

\subsection{Case study - role of cost effectiveness of
	security measures}

As mentioned earlier, Theorem~\ref{thm:1} sheds some
light on how the changing degree correlations of
the underlying dependence graph might influence the
ARE as it deviates from a neutral 
graph and becomes either assortative or disassortative.
Interestingly, it turns out that the answer also 
depends on the (cost) effectiveness of available security 
measures, i.e., how quickly the infection probability
$p$ drops with security investment. 
To illustrate this, we consider following example
cases. 

Suppose that $p^\star(r) = \nu \ r^{-\chi}$ 
over $[r_{\min}, r_{\max}]$ for some 
$\nu, \chi > 0$.  
\myskip

{\bf Case 1: Effective security measures -- 
$\chi > 1$:}  This describes cases
where the security measures are cost effective
in that the probability of infection falls
quickly with increasing security investments. 
In this case, it is easy to see that the expected 
number of {\em successful} attacks or infections 
an agent of degree $d$
suffers at an NE, namely $\big( \tau_A + d 
\ e_{d}(\ba^\star(\bs, \bg), \bs) \big) 
p^\star\big( \tau_A + 
d \ e_{d}(\ba^\star(\bs, \bg), \bs) \big)$, is
decreasing in $d$ when the mixing vector
$\bg$ is sufficiently close to 
${\bf 1}$. Thus, at a pure-strategy NE, agents with 
higher degrees suffer {\em fewer} number of
infections than agents with smaller degrees.

For this reason, if the network is assortative, 
agents with higher degrees would see lower risks from 
their neighbors that tend to have larger degrees
as well. Accordingly, $g_d$ would decrease
with $d$, and Theorem~\ref{thm:1} suggests that 
the ARE would decrease (compared to the 
case with a neutral dependence graph). 
A similar argument tells us that if the network 
becomes disassortative and agents with higher degrees
tend to be neighbors with those of smaller degrees, 
the ARE would rise as a result. 
\myskip

{\bf Case 2: Ineffective security measures -- 
$\chi < 1$:} In contrast to the first
case, in the second case the probability of 
infection does not diminish rapidly with the security
investments. Consequently, agents with higher
degrees would suffer more infections
in spite of higher 
security investments because they also experience 
more attacks. Thus, Theorem~\ref{thm:1} indicates
that when the network is
assortative (resp. disassortative), the ARE would 
be higher (resp. lower) compared to the case of 
a neutral dependence graph. 
\myskip

This finding highlights another layer of difficulty 
in understanding the effects of network mixing on 
overall network security when the agents are 
strategic; the overall effects of degree correlations
depend also on how effective security measures
are at fending off attacks. Our finding suggests
that when the security measures are more cost
effective and the probability of infection 
drops quickly with increasing security
investments (case 1), the higher assortativity
of dependence graph tends to reduce the ARE
at the equilibrium. On the other hand, when the
security measures are not cost effective (case 2), 
it has the {\em opposite} effect.

Finally, we point out that our finding is proved 
only in the local neighborhood around the neutral 
dependence graph. However, as our numerical study in 
the subsequent section shows, we suspect that it
holds much more generally even outside the local 
neighborhood.

\change{
\section{Proofs of Main Results}
	\label{sec:Proofs}

This section contains the proofs of main 
results in Sections~\ref{sec:Preliminaries} and
\ref{sec:MainResults}. A reader who is not 
interested in the proofs can proceed to 
Section~\ref{sec:Numerical} for numerical
studies.  

\subsection{A proof of Lemma~\ref{lemma:existence}}
	\label{appen:existence}
	
Let $H: \cI^{D_{\max}} \to \cI^{D_{\max}}$, 
where $H_d(\ba) = I^{\opt}(\tau_A + d \ e(\ba, \bs))$, 
$d \in \cD$. Then, from Assumption~\ref{assm:pa} and the 
definition of $I^{\opt}$, the mapping $H$ is continuous.
Therefore, since $\cI^{D_{\max}}$ is a compact, convex subset 
of $\R^{D_{\max}}$, the Brouwer's fixed point theorem 
\cite{Kakutani} tells us that there exists a fixed point of 
$H$, say $\ba'$, 
such that $H(\ba') = \ba'$. It is clear from the definition 
of a pure-strategy NE in Definition~\ref{defn:PSNE}
that $\ba'$ is a pure-strategy NE.

\subsection{A proof of Theorem~\ref{thm:unique}}
	\label{appen:unique}

In order to prove the theorem, we will first prove that
if $\ba^1$ and $\ba^2$ are two pure-strategy NEs, then 
$e_{\avg}(\ba^1, \bs) = e_{\avg}(\ba^2, \bs)$. 	
We prove this by contradiction. Suppose that the 
claim is false and there exist two pure-strategy NEs
with different AREs. Without loss of generality, 
assume $e_{\avg}(\ba^1, \bs) < e_{\avg}(\ba^2, \bs)$. This 
means that $e_d(\ba^1, \bs) < e_d(\ba^2, \bs)$ for all $d \in 
\cD$. 

Together with Corollary~\ref{coro:OptInfProb}, 
this means $p(a^1_d) \geq p(a^2_d)$ for all $d \in \cD$ 
and, as a result,  
\beqan
e_\avg(\ba^1, \bs)
\myeq \Theta\left( \sum_{d \in cD} d \ s_d \ 
	p(a^1_d) \right) \lb
\mygeq \Theta\left( \sum_{d \in cD} d \ s_d \ 
	p(a^2_d) \right)
	= e_\avg(\ba^2, \bs). 
\eeqan
But, this contradicts the earlier assumption 
$e_{\avg}(\ba^1, \bs) < e_{\avg}(\ba^2, \bs)$. 
The theorem now follows from the observation that,
for every population $d \in \cD$,  given a fixed RE, 
there exists a unique optimal investment that 
minimizes the cost. This proves the uniqueness of
pure-strategy NE.

\subsection{A proof of Theorem~\ref{thm:1}}
	\label{appen:thm1}
	
Since the population size vector $\bs$ is fixed, for
notational convenience, we shall omit the dependence
of $e_{\avg}$, $\Phi$ and $\bw$ on $\bs$ throughout 
the proof.
 
First, note from (\ref{eq:ARE}) that pure-strategy NEs 
$\ba^i = \ba^\star(\bs, \bg^i)$, $i = 1, 2$, satisfy
\beqa
e_{\avg}(\ba^i)
\myeq \Phi\left( \sum_{d \in \cD} w_d \ p(a^i_d) 
	\right) \lb
\myeq \Phi\left( \sum_{d \in \cD} w_d \ 
	p^\star \big( \tau_A + d 
	\ g_d^i \ e_{\avg}(\ba^i) \big) \right).
	\label{eq:2-1}
\eeqa
Moreover, given a mixing vector $\bg$, by the uniqueness 
of pure-strategy NE and Corollary~\ref{coro:OptInfProb}, 
there exists a unique $e_{\avg}$
that satisfies (\ref{eq:2-1}), namely 
$e_{\avg}(\ba^\star(\bs, \bg))$. 

Define $\vartheta: \R_+^{D_{\max}} \times \R_+ 
\to \R$, where 
\beqa
\vartheta(\bg, e) 
\myeq \Phi\left( \sum_{d \in \cD} w_d 
	\ p^\star(\tau_A + d \ g_d \ e) \right) - e. 
	\label{eq:vartheta}
\eeqa
From (\ref{eq:2-1}), we have 
\beqa
\vartheta(\bg^i, e_{\avg}(\ba^i)) = 0, \ i = 1, 2.
	\label{eq:2-2}
\eeqa
Also, one can verify 
\beqa
\frac{\partial \vartheta(\bg^i, e_{\avg}(\ba^i))}
	{\partial e} < 0.
	\label{eq:2-2a}
\eeqa
This is intuitive because as the ARE rises, 
agents see higher risks and invest more in security, 
thus reducing their vulnerability to attacks. 

From (\ref{eq:2-2}) and (\ref{eq:2-2a}) and the assumption
in the theorem, the implicit 
function theorem~\cite{Rudin} tells us that there exist 
open sets $O_{e} \in \R_+$ and $O_{\bg} \subset 
\R_+^{D_{\max}}$, which contains ${\bf 1}$, 
and a function $e^\star:O_{\bg} \to O_{e}$ 
such that, for all $\bg \in O_{\bg}$, 
\beqan
\vartheta(\bg, e^\star(\bg)) = 0. 
\eeqan
It is clear that $e^\star(\bg) = e_{\avg}(\ba^\star(\bs, 
\bg))$ for all $\bg \in O_{\bg}$. 
In addition, for all $d \in \cD$, 
\beqa
\frac{\partial e^\star(\bg)}{\partial g_d}
\myeq - \left( \frac{\partial \vartheta(\bg, e^\star(\bg)) }
	{\partial e} \right)^{-1} 
	\frac{\partial \vartheta(\bg, e^\star(\bg))}{\partial g_d}. 
	\label{eq:2-3}
\eeqa
Hence, (\ref{eq:2-3}) tells us how the ARE will change 
locally as the mixing vector is perturbed around ${\bf 1}$, 
i.e., a neutral graph. 

The theorem can be proved with the help of 
the following lemma. Let ${\bf e}_d$ denote the 
$D_{\max} \times 1$ zero-one vector whose only nonzero
element is the $d$th entry. 
\myskip

\begin{lemma}	\label{lemma:2}
Let $1 \leq d_2 < d_1 \leq D_{\max}$. 
Choose $\bg^3 \in O_{\bg}$ and $\delta >0$. 
Suppose  
$\bg^4 := \bg^3 + \delta {\bf e}_{d_1} - \delta 
\frac{w_{d_1}}{w_{d_2}} {\bf e}_{d_2} \in O_{\bg}$. 
Then, for all sufficiently small $\delta$, we have
$e_{\avg}(\ba^\star(\bs, \bg^3)) < e_{\avg}(
\ba^\star(\bs, \bg^4))$. 
\end{lemma}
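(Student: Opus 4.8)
The plan is to establish the strict inequality by showing that the $C^1$ map $e^\star$ supplied by the implicit function theorem strictly increases along the straight segment joining $\bg^3$ to $\bg^4$. First I would record that the perturbation direction $\bg^4-\bg^3=\delta\,\bu$, with $\bu:={\bf e}_{d_1}-\tfrac{w_{d_1}}{w_{d_2}}{\bf e}_{d_2}$, is an \emph{admissible transfer}: since $\sum_d w_d (\bg^4)_d=\sum_d w_d(\bg^3)_d+\delta\big(w_{d_1}-w_{d_2}\cdot\tfrac{w_{d_1}}{w_{d_2}}\big)=\sum_d w_d(\bg^3)_d$, the normalization $\sum_d w_d g_d=1$ is preserved, so $\bg^4$ is a genuine mixing vector, and for $\delta$ small and $\bg^3\in O_{\bg}$ the whole segment stays inside the domain on which $e^\star$ is defined. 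By the fundamental theorem of calculus, $e^\star(\bg^4)-e^\star(\bg^3)=\int_0^{\delta}\nabla e^\star(\bg^3+t\bu)\cdot\bu\,dt$, so it suffices to prove that the directional derivative $\nabla e^\star\cdot\bu$ is strictly positive everywhere on this short segment.

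Next I would compute that directional derivative from (\ref{eq:2-3}). Differentiating (\ref{eq:vartheta}) gives $\partial_{g_d}\vartheta(\bg,e)=\Phi'(\cdot)\,w_d\,d\,e\,\dot{p}^\star(\tau_A+d\,g_d\,e)$, with $\Phi'(\cdot)>0$ by Assumption~\ref{assm:theta}. Writing $r_d:=\tau_A+d\,g_d\,e^\star(\bg)$ and combining with (\ref{eq:2-3}) and the sign fact (\ref{eq:2-2a}) that $\partial_e\vartheta<0$ (so that $-(\partial_e\vartheta)^{-1}>0$), the $w_{d_2}$ in the denominator of $\bu$ cancels the $w_{d_2}$ produced by $\partial_{g_{d_2}}\vartheta$, yielding
\[
\nabla e^\star(\bg)\cdot\bu=\Big(-\tfrac{1}{\partial_e\vartheta}\Big)\,\Phi'(\cdot)\,w_{d_1}\,e^\star(\bg)\,\big[d_1\,\dot{p}^\star(r_{d_1})-d_2\,\dot{p}^\star(r_{d_2})\big],
\]
where every prefactor is strictly positive. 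Hence the sign of $\nabla e^\star\cdot\bu$ equals the sign of the bracket $d_1\,\dot{p}^\star(r_{d_1})-d_2\,\dot{p}^\star(r_{d_2})$, and the problem is reduced to showing this bracket is strictly positive on the segment.

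The remaining, and central, step is to prove the bracket is positive; this is where Assumption~\ref{assm:pstar} enters and where I expect the real difficulty to lie. Since $d_1>d_2$ and $e^\star>0$, at $\bg={\bf 1}$ (hence, by continuity, on a neighborhood) we have $r_{d_1}>r_{d_2}$, both contained in $(r_{\min},r_{\max})$ by the interiority hypothesis $\ba^\star(\bs,{\bf 1})\in{\rm int}(\cI^{D_{\max}})$. Using $d\,e^\star=g_d^{-1}(r_d-\tau_A)$, at ${\bf 1}$ the bracket equals $(e^\star)^{-1}\big[(r_{d_1}-\tau_A)\dot{p}^\star(r_{d_1})-(r_{d_2}-\tau_A)\dot{p}^\star(r_{d_2})\big]$, so its positivity is precisely the statement that $r\mapsto(r-\tau_A)\dot{p}^\star(r)$ is strictly increasing on $[r_{\min},r_{\max}]$. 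I would derive this from Assumption~\ref{assm:pstar}: the increasingness of $r\,\dot{p}^\star(r)$, together with $\dot{p}^\star\le0$ from Corollary~\ref{coro:OptInfProb} and the fact that the relevant arguments satisfy $r\ge\tau_A+e^\star>\tau_A$, controls the extra term $-\tau_A\dot{p}^\star(r)$; note that the side conditions stated for the example families (e.g. $0\le\nu_2\le r_{\min}/\chi_1$) are exactly what make the $\tau_A$-shifted product increasing. The main obstacle is precisely this $\tau_A$-offset, whose contribution runs opposite to $r\,\dot{p}^\star(r)$ when $p^\star$ is convex, so one cannot simply quote Assumption~\ref{assm:pstar} verbatim and must instead work on the range $r>\tau_A$. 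Once the bracket is strictly positive at ${\bf 1}$, continuity of $r_d$, $\dot{p}^\star$, $\Phi'$ and $(\partial_e\vartheta)^{-1}$ in $\bg$ makes $\nabla e^\star\cdot\bu$ strictly positive on a neighborhood of ${\bf 1}$; shrinking $O_{\bg}$ if needed and taking $\delta$ small keeps the entire segment in that neighborhood, so the integrand above is positive and $e_{\avg}(\ba^\star(\bs,\bg^3))=e^\star(\bg^3)<e^\star(\bg^4)=e_{\avg}(\ba^\star(\bs,\bg^4))$ follows.
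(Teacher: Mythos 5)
Your proposal follows the paper's own route almost step for step: both arguments apply the implicit function theorem to $\vartheta$ from (\ref{eq:vartheta}), both exploit the cancellation of $w_{d_2}$ against the transfer weight $\frac{w_{d_1}}{w_{d_2}}$, and both reduce the lemma to the strict positivity of the very same bracket $d_1\,\dot{p}^\star(r_{d_1})-d_2\,\dot{p}^\star(r_{d_2})$ multiplied by strictly positive factors (compare your displayed formula with the paper's (\ref{eq:2-6}) and (\ref{eq:2-3})). Your only structural deviation is integrating the directional derivative along the segment from $\bg^3$ to $\bg^4$ rather than using the paper's first-order expansion with an $o(\delta)$ remainder in (\ref{eq:2-7}); that difference is immaterial.

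Where you diverge in substance is the final positivity step, and there your argument has a genuine gap. You are right that, at $\bg={\bf 1}$, positivity of the bracket is \emph{equivalent} to strict monotonicity of $r\mapsto(r-\tau_A)\,\dot{p}^\star(r)$, which is not what Assumption~\ref{assm:pstar} asserts. But your proposed bridge --- that $\dot{p}^\star\le 0$ (Corollary~\ref{coro:OptInfProb}) together with $r>\tau_A$ ``controls'' the extra term $-\tau_A\dot{p}^\star(r)$ --- does not work: writing
\[
(r-\tau_A)\,\dot{p}^\star(r)=r\,\dot{p}^\star(r)-\tau_A\,\dot{p}^\star(r),
\]
the first summand is increasing by Assumption~\ref{assm:pstar}, but when $p^\star$ is convex (as in all of the paper's examples) the second summand $-\tau_A\dot{p}^\star(r)$ is nonnegative and \emph{decreasing}, so it pushes in the opposite direction, and no bound of the form $r>\tau_A$ repairs this. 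Concretely, take $p^\star(r)=\nu/r$, which is exactly the paper's exponential-$p$ example: Assumption~\ref{assm:pstar} holds since $r\dot{p}^\star(r)=-\nu/r$ is strictly increasing, yet $(r-\tau_A)\dot{p}^\star(r)=-\nu(r-\tau_A)/r^2$ has derivative $\nu(r-2\tau_A)/r^3<0$ for $r<2\tau_A$; so in a regime where the equilibrium risks satisfy $\tau_A<r_d<2\tau_A$ (small equilibrium ARE, still interior), the bracket is strictly \emph{negative} and the claimed inequality reverses. Your parenthetical claim that the side conditions on the example families (e.g., $0\le\nu_2\le r_{\min}/\chi_1$) are ``exactly'' what make the $\tau_A$-shifted product increasing is likewise incorrect --- those conditions deliver Assumption~\ref{assm:pstar} itself, not its shifted version. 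To be fair, the paper's own justification of (\ref{eq:2-6}) elides precisely the same point by invoking ``$d_2<d_1$, $g^3_{d_1}\approx g^3_{d_2}$ and Assumption~\ref{assm:pstar}''; the clean repair, for both your argument and the paper's, is to require that $u\mapsto u\,\dot{p}^\star(\tau_A+u)$ (equivalently, $r\mapsto(r-\tau_A)\dot{p}^\star(r)$ on $[r_{\min},r_{\max}]$) be strictly increasing, after which every remaining step you outline goes through.
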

\begin{proof}
A proof of lemma is provided in 
Section~\ref{appen:lemma2}.
\end{proof}

Theorem~\ref{thm:1} now follows from the observation that, 
starting with mixing vector $\bg^2$, we can obtain the 
other mixing vector $\bg^1$ by performing a sequence
of operations described in Lemma~\ref{lemma:2}. We 
first provide the procedure for general cases and then 
illustrate it using an example.
\myskip

\noindent
{\underline{\bf Procedure for constructing ${\bf g}^1$ 
from ${\bf g}^2$}} \myskip

\bitem
\item {\bf Step 0:} Let $\tilde{\bg} = \bg^2$. 

\item {\bf Step 1:} Find 
$d_1 = \max\{d \in \cD \ | \ g^1_{d} > \tilde{g}_d\}$
and 
$d_2 = \min\{d \in \cD \ | \ \tilde{g}_{d} > g^1_d\}$.

\item {\bf Step 2:} Increase $\tilde{g}_{d_1}$ by
$\min \big( g^1_{d_1} - \tilde{g}_{d_1}, 
\frac{w_{d_2}}{w_{d_1}} (\tilde{g}_{d_2} - g^1_{d_2}) 
\big)$,
and reduce $\tilde{g}_{d_2}$ by
$\min \big( \tilde{g}_{d_2} - g^1_{d_2}, 
\frac{w_{d_1}}{w_{d_2}} (g^1_{d_1} - \tilde{g}_{d_1}) \big)$.

\item {\bf Step 3:}
If $\tilde{\bg} \neq \bg^1$, repeat Steps 1 and 2
with new $\tilde{\bg}$.
\myskip

\eitem

The first-order stochastic dominance of ${\bf v}^1 := 
(w_d \ g^1_d; \ d \in \cD)$ over ${\bf v}^2 := 
(w_d \ g^2_d; \ d \in \cD)$ guarantees that the above
procedure will terminate after a finite number of iterations
with $\tilde{\bg} = \bg^1$. Moreover, Lemma~\ref{lemma:2}
tells us that the ARE increases after each iteration. 
\myskip

{\bf Example --}
Suppose ${\bf w} = (0.6 \ 0.3 \ 0.1)^T$, $\bg^1 = 
(0.942 \ 1.05 \ 1.2)^T$, and $\bg^2 = (1.02 \ 1.0
\ 0.88)^T$. Then, one can easily verify that
condition (\ref{eq:thm1}) in Theorem
\ref{thm:1} is satisfied. 
\myskip

\noindent $\circ$ Step 0: $\tilde{\bg} = \bg^2$.  
\myskip

\noindent \underline{{\em Iteration \#1}}

\noindent $\circ$ Step 1: $d_1 = 3$ and $d_2 = 1$. 

\noindent $\circ$ Step 2:
Increase $\tilde{g}_3$ by $\min\big( 1.2 - 0.88, 
\frac{0.6}{0.1} \times (1.02 - 0.942) \big) = 0.32$, 
and decrease $\tilde{g}_1$ by $\min\big( 1.02 - 0.942, 
\frac{0.1}{0.6} \times (1.2 - 0.88) \big) = 0.053$.
This gives us new $\tilde{\bg} = (0.967 \ 1.0 \ 
1.2)^T$, which does not equal $\bg^1$.
\myskip

\noindent \underline{{\em Iteration \#2}}

\noindent $\circ$ Step 1: $d_1 = 2$ and $d_2 = 1$. 

\noindent $\circ$ Step 2:
Increase $\tilde{g}_2$ by $\min\big( 1.05 - 1.0, 
\frac{0.6}{0.3} \times (0.967 - 0.942) \big) = 
0.05$, and reduce $\tilde{g}_1$ by 
$\min\big( 0.967 - 0.942, \frac{0.3}{0.6} \times 
(1.05 - 1.0) \big) = 0.025$. This 
yields new $\tilde{\bg} = (0.942 \ 1.05 \ 1.2)^T$, 
which is equal to $\bg^1$, and we terminate the
procedure.

\subsection{A proof of Lemma~\ref{lemma:suff}}
	\label{appen:suff}
	
We prove the lemma with help of the following 
Lemma~\ref{lemma:suff}, whose proof is straightforward
and is omitted here.
\myskip

\begin{lemma} \label{lemma:aux}
Suppose that ${\bf a} = (a_\ell; \ell = 1, \ldots, K)$
and ${\bf b} = (b_\ell;\ell = 1, \ldots, K)$ are two 
finite sequences of nonnegative real numbers of length
$K > 1$ and satisfy 
\beqa
\frac{b_{\ell+1}}{a_{\ell+1}} \leq \frac{b_\ell}{a_\ell}
	\ \mbox{ for all } \ell = 1, \ldots, K-1. 
	\label{eq:lemma2-0}	
\eeqa
Then, 
\beqa
\frac{ \sum_{\ell = 1}^{K} b_\ell }
	{ \sum_{\ell = 1}^{K} a_\ell } 
\myleq
\frac{ \sum_{\ell = 1}^k b_\ell }{ \sum_{\ell = 1}^k a_\ell }
	\ \mbox{ for all } k = 1, \ldots, K.
	\label{eq:lemma2-1}
\eeqa \\ \vspace{-0.1in}
\end{lemma}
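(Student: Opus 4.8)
The plan is to read Lemma~\ref{lemma:aux} as a monotonicity statement about ratios of partial sums and to exploit the elementary fact that a ratio of sums is a weighted average of the termwise ratios. Writing $A_k := \sum_{\ell=1}^k a_\ell$, $B_k := \sum_{\ell=1}^k b_\ell$, $c_\ell := b_\ell/a_\ell$ and $\mu_k := B_k/A_k$ (the hypothesis (\ref{eq:lemma2-0}) presupposes $a_\ell>0$, so these ratios are well defined; any zero-weight index may simply be discarded at the outset), the desired inequality (\ref{eq:lemma2-1}) is exactly $\mu_K \leq \mu_k$ for every $k$. Hence it suffices to show that the sequence $(\mu_k)_{k=1}^K$ is nonincreasing, since then $\mu_K \leq \mu_k$ for all $k$.

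To establish this I would first note that $\mu_k$ is a convex combination of the termwise ratios, $\mu_k = \sum_{\ell=1}^k (a_\ell/A_k)\, c_\ell$, so $\mu_k$ lies in the range spanned by $c_1,\ldots,c_k$. By hypothesis (\ref{eq:lemma2-0}) the sequence $(c_\ell)$ is nonincreasing, whence $c_{k+1}\leq c_\ell$ for every $\ell\leq k$ and therefore $c_{k+1}\leq \mu_k$. The key step is then the one-line recursion $\mu_{k+1} = \frac{A_k}{A_{k+1}}\mu_k + \frac{a_{k+1}}{A_{k+1}}c_{k+1}$, which exhibits $\mu_{k+1}$ as a convex combination of $\mu_k$ and $c_{k+1}$; since $c_{k+1}\leq\mu_k$ this forces $\mu_{k+1}\leq\mu_k$. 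Chaining these inequalities gives $\mu_K\leq\mu_{K-1}\leq\cdots\leq\mu_k$ for each $k$, which is precisely the claim.

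The argument is elementary and I expect no genuine analytic obstacle, consistent with the paper's remark that the proof is straightforward; the only point requiring a little care is degeneracy of the weights, handled by taking $a_\ell>0$ as implicit in (\ref{eq:lemma2-0}) or by dropping zero-weight indices (which affect neither numerator nor denominator). If one prefers to avoid division altogether, the same conclusion follows by clearing denominators: $\mu_K\leq\mu_k$ is equivalent to $B_K A_k \leq B_k A_K$, which after cancelling the common term $B_k A_k$ reduces to $A_k\sum_{\ell>k}b_\ell \leq B_k\sum_{\ell>k}a_\ell$, and this in turn follows termwise from $b_\ell a_{\ell'} \leq b_{\ell'} a_\ell$ whenever $\ell>k\geq\ell'$, the cross-multiplied form of (\ref{eq:lemma2-0}). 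Either route is short, but I would present the weighted-average version as the most transparent.
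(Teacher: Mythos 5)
Your proof is correct, and both of the routes you sketch (the weighted-average recursion $\mu_{k+1} = \frac{A_k}{A_{k+1}}\mu_k + \frac{a_{k+1}}{A_{k+1}}c_{k+1}$ with $c_{k+1}\leq\mu_k$, and the denominator-clearing reduction to the termwise cross-multiplied inequalities $b_\ell a_{\ell'} \leq b_{\ell'} a_\ell$ for $\ell > \ell'$) are sound; your handling of the implicit positivity $a_\ell > 0$ is also the right reading of the hypothesis. Note, however, that there is no paper proof to compare against: the paper explicitly declares the proof of this auxiliary lemma ``straightforward'' and omits it, so your argument simply supplies the missing justification, and it does so in a clean and complete way.
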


Proceeding with the proof of Lemma~\ref{lemma:suff}, 
recall that the condition (\ref{eq:suff}) in 
Lemma~\ref{lemma:suff} states
\beqa
&& \myhb \frac{ w_{d+1}(\bs) \ g^2_{d+1} }
	{ w_{d+1}(\bs) \ g^1_{d+1} } 
	= \frac{v_{d+1}(\bs, \bg^2)}{v_{d+1}(\bs, \bg^1)}
\leq \frac{v_{d}(\bs, \bg^2)}{v_{d}(\bs, \bg^1)}
	=\frac{ w_{d}(\bs) \ g^2_{d} }
	{ w_{d}(\bs) \ g^1_{d} }  \lb
&& \hspace{0.5in} \mbox{ for all } d = 1, 2, \ldots, 
	D_{\max}-1.
	\label{eq:lemma2-4}
\eeqa
Together with (\ref{eq:lemma2-4}), Lemma~\ref{lemma:aux} 
tells us, for all $d = 1, 2, \ldots, D_{\max}$, we have
\beqan
\frac{\sum_{d' = 1}^{D_{\max}} v_{d'}(\bs, \bg^2)}
	{\sum_{d' = 1}^{D_{\max}} v_{d'}(\bs, \bg^1)} 
= \frac{1}{1}
\leq \frac{\sum_{d' = 1}^{d} v_{d'}(\bs, \bg^2)}
	{\sum_{d' = 1}^{d} v_{d'}(\bs, \bg^1)} 
\eeqan
or, equivalently, 
\beqan
\sum_{d' = 1}^d v_{d'}(\bs, \bg^1)
\leq  \sum_{d' = 1}^d v_{d'}(\bs, \bg^2). 
\eeqan
This completes the proof of the lemma.

\subsection{A proof of Lemma~\ref{lemma:2}}
	\label{appen:lemma2}
	
In order to prove the lemma, we will use 
(\ref{eq:2-3}) to demonstrate 
\beqa
0 > \frac{\partial e^\star(\bg^3)}{\partial g_{d_1}}
>  \frac{w_{d_1}}{w_{d_2}}  
	\frac{\partial e^\star(\bg^3)}{\partial g_{d_2}}.
	\label{eq:2-4}
\eeqa 
First, note 
\beqan
\frac{\partial \vartheta(\bg^3, e)}{\partial e}
\myeq \dot{\Phi}\left( \sum_{d' \in \cD} w_{d'} \ 
	p^\star(\tau_A + d' \ g^3_{d'} \ e) \right) \lb
&& \myb \times \left( \sum_{d' \in \cD} w_{d'} \ 
	\dot{p}^\star(\tau_A + 
	d' \ g^3_{d'} \ e) \ d' \ g^3_{d'} \right) - 1 \lb
& \myb < & \myb 0.  
\eeqan
Hence, in order to prove (\ref{eq:2-4}), it suffices to 
show 
\beqan
0 > \frac{\partial \vartheta(\bg^3, e^\star(\bg^3))}{\partial g_{d_1}}
> \frac{w_{d_1}}{w_{d_2}} 
	\frac{\partial \vartheta(\bg^3, e^\star(\bg^3))}{\partial g_{d_2}}. 
\eeqan

From the definition of $\vartheta$ in (\ref{eq:vartheta}), 
\beqa
\frac{\partial \vartheta(\bg^3, e)}{\partial g_{d}}
\myeq \dot{\Phi}\left( \sum_{d' \in \cD} w_{d'} \ 
	p^\star(\tau_A + d' \ g^3_{d'} \ e) \right) \lb
&& \times \ w_d \ \dot{p}^\star(\tau_A 
	+ d \ g^3_d \ e) \ d \ e.
	\label{eq:2-5}
\eeqa
Thus,
\beqa
&& \myhb  \frac{\partial \vartheta(\bg^3, e^\star(\bg^3))}
	{\partial g_{d_1}}
- \frac{w_{d_1}}{w_{d_2}} 
	\frac{\partial \vartheta(\bg^3, e^\star(\bg^3))}
		{\partial g_{d_2}} \lb
\myeq \dot{\Phi}\left( \sum_{d' \in \cD} w_{d'} \ 
	p^\star(\tau_A + d' \ g^3_{d'} \ e^\star(\bg^3)) \right) 
		\ w_{d_1} \ e^\star(\bg^3) \lb
&& \times \Big( d_1 \ \dot{p}^\star(\tau_A + d_1 \ g^3_{d_1} 
	\ e^\star(\bg^3)) \lb
&& \myhf -\ d_2 \ \dot{p}^\star(\tau_A + d_2 \ g^3_{d_2} 
		\ e^\star(\bg^3)) \Big) \lb
\myg 0, 
	\label{eq:2-6}
\eeqa
where the inequality follows from our assumption $d_2 < 
d_1$, $g^3_{d_1} \approx g^3_{d_2}$ for sufficiently small 
set $O_{\bg}$ including ${\bf 1}$ and Assumption
\ref{assm:pstar}. 

Putting things together, 
\beqa
&& \myhb e_{\avg}(\ba^\star(\bs, \bg^4), \bs) 
	- e_{\avg}(\ba^\star(\bs, \bg^3), \bs) \lb
\myeq - \left( \frac{\partial \vartheta(\bg^3, e^\star(\bg^3))}
	{\partial e} \right)^{-1}  \lb
&& \myf \times 
	\left( \frac{\partial \vartheta(\bg^3, e^\star(\bg^3))}
			{\partial g_{d_1}}
	\delta - \frac{w_{d_1}}{w_{d_2}} 
	\frac{\partial \vartheta(\bg^3, e^\star(\bg^3))}
		{\partial g_{d_2}} \delta \right) \lb
&& + \ o(\delta). 
	\label{eq:2-7}
\eeqa
From the inequality in (\ref{eq:2-6}), for all sufficiently small 
$\delta > 0$, we have (\ref{eq:2-7}) $> 0$. 
}


\section{Numerical Results}
	\label{sec:Numerical}
	
In this section, we provide some numerical results (i) 
to validate our main findings in the previous section 
and (ii) to illustrate how the cost effectiveness of 
available security measures and the function $\Theta$ 
in (\ref{eq:ARE1}) affect the resulting ARE at the 
pure-strategy NE. 
While our analytical findings in the previous
section offer some insights into the {\em qualitative}
behavior of the network security measured by ARE, 
it does not provide {\em quantitative} answers. For this
reason, we resort to numerical studies to find out
how the effectiveness of security measures and 
the sensitivity of ARE to agents' vulnerability to
attacks shape the impact of degree correlations on 
network security.

For the numerical results, the maximum degree is set to 
$D_{\max} = 20$, and the population size vector is 
assumed to be a (truncated) power law with exponent
2, i.e. $f_d \propto d^{-2}$. It is shown that the degree
distribution of many natural and engineered networks can 
be approximated using a power law with exponents in 
[1, 3] (e.g., \cite{Albert2000, 
Lakhina2003}). In addition, we choose $\tau_A = 0.7$, 
$\beta_{IA} = 1$, $I_{\min} = 10^{-3}$, $I_{\max} = 10^3$, 
and $L = 10$. Here, we intentionally pick small $I_{\min}$
and large $I_{\max}$ so that neither becomes an active 
constraint at an NE. 

The mixing vectors we consider are of the form $g_d^{(\rho)} 
\propto d^{\rho}, d \in \cD$, with $\rho \in$ [-0.3, 0.3], 
subject to the constraint $\sum_{d \in \cD} w_d \cdot 
g^{(\rho)}_d = 1$. We pick this range of
$\rho$ to clearly demonstrate the 
behavior of ARE as a function of $\rho$. 
Obviously, when $\rho = 0$, we have 
$g_d^{(0)} = 1$ for all $d \in \cD$ and the dependence graph
is neutral. Note that if $\rho_2 < \rho_1$, we have
\beqan
\frac{g^{(\rho_2)}_{d+1}}{g^{(\rho_2)}_d}
	= \left( \frac{d+1}{d} \right)^{\rho_2} 
\myl \left( \frac{d+1}{d} \right)^{\rho_1}
	= \frac{g^{(\rho_1)}_{d+1}}{g^{(\rho_1)}_d} \lb
&& \mbox{ for all } d = 1, 2, \ldots, D_{\max}-1,
\eeqan
and the sufficient condition in (\ref{eq:suff}) holds 
with strict inequality for $\bg^i = \bg^{(\rho_i)}$, 
$i = 1, 2$. 
Consequently, as $\rho$ ascends, agents experience 
greater REs with increasing degrees. Finally, 
the interval [-0.3, 0.3] provides a sufficiently
wide range of mixing vectors to illustrate that 
the qualitative nature of our analytical findings
in the previous section 
holds outside a small local neighborhood around the
neutral graph.

We assume infection probability $p(a) = \epsilon^\gamma / 
(a + \epsilon)^\gamma$, where $\epsilon = 0.1$. We 
vary $\gamma$ to alter the cost effectiveness of security 
measures; the larger $\gamma$ is, the more cost effective
they are in that the infection probability diminishes 
faster with security investments. After a little algebra, 
we get 
\beqa
I^{\opt}(r)
\myeq (r \ L \ \epsilon^\gamma)^{\frac{1}{\gamma+1}} 
	- \epsilon, \ r \in [r_{\min}, r_{\max}], 
	\label{eq:Iopt}
\eeqa
where 
\beqan
r_{\min} = \frac{(\epsilon + I_{\min})^{\gamma+1}}
	{L \ \epsilon^\gamma}
\ \mbox{ and } \
r_{\max} = \frac{(\epsilon + I_{\max})^{\gamma+1}}
	{L \ \epsilon^\gamma}. 
\eeqan
Substituting (\ref{eq:Iopt}) in $p(a)$ yields 
\beqan
p^\star(r)
\myeq \frac{\epsilon^\gamma}{(r \ L \ 
	\epsilon^\gamma)^{\gamma/(\gamma+1)} }, 
	\ r \in [r_{\min}, r_{\max}]. 
\eeqan
Therefore, $p^\star(r) \propto r^{- \gamma/(\gamma+1)}$
over the interval $[r_{\min}, r_{\max}]$, and the infection
probability at the optimal investment falls more quickly 
with an increasing risk as $\gamma$ climbs.

\subsection{Effects of infection probability function}

In our first numerical study, we examine how the 
effectiveness of security measures, which is determined 
by $\gamma$, shapes the effects of dependence graph 
assortativity on equilibrium ARE. 
Since $\frac{\gamma}{\gamma+1} < 1$, this corresponds to
case 2 discussed in the previous section. As a result,
when $\rho$ is negative (resp. positive), the dependence 
graph is disassortative (resp. assortative), and 
Theorem~\ref{thm:1} suggests that the ARE shall rise 
with increasing $\rho$. However, the theorem does 
not tell us the {\em quantitative} behavior of the 
equilibrium ARE as either $\rho$ or 
the parameter of infection probability, 
namely $\gamma$, changes. Thus, we turn to 
numerical studies to find an answer. 
For our study, we employ a linear ARE function 
in (\ref{eq:ARE2}) with $K \cdot d_{\avg}(\bs)
= K \cdot 2.254 = 1000$. 

\begin{figure}[h]
\centerline{
	\includegraphics[width=3.3in]{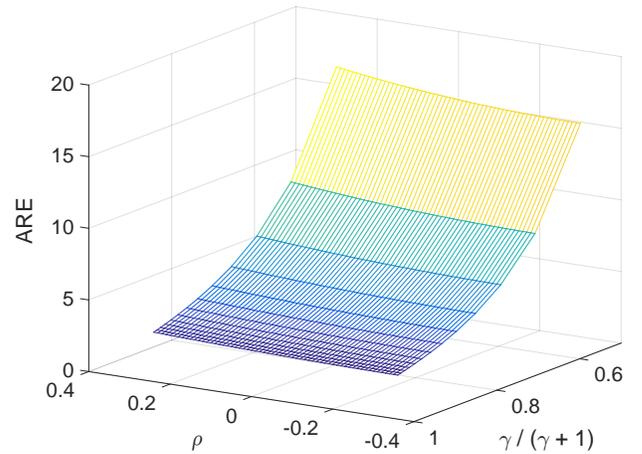}
}
\caption{Plots of ARE as a function of $\gamma$
	and $\rho$.}
\label{fig:mesh}
\end{figure}

\begin{figure}[h]
\centerline{
	\includegraphics[width=3.3in]{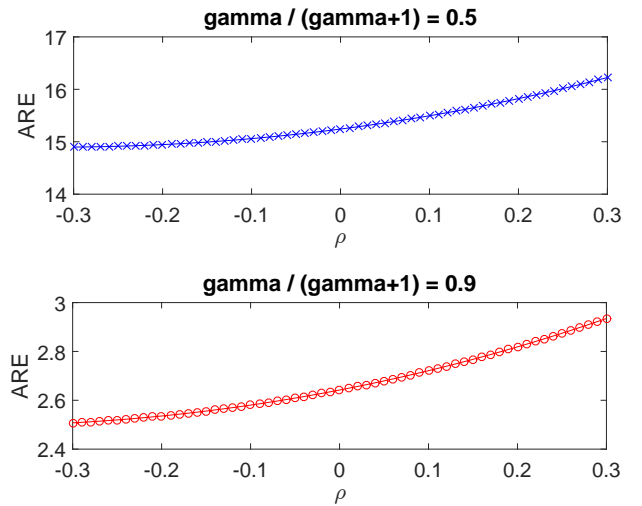}
}
\caption{Plots of ARE as a function of $\rho$ 
	for $\gamma = 1$ and $\gamma = 9$.}
\label{fig:two}
\end{figure}

Fig.~\ref{fig:mesh} plots the ARE at the pure-strategy NE
as both the parameters $\rho$ and $\gamma$ are varied. 
There are two observations that we point out. 
First, it confirms that, for fixed $\gamma$,  
the ARE rises with increasing $\rho$ as predicted by
Theorem~\ref{thm:1}. This can be seen more easily 
in Fig.~\ref{fig:two}, which displays the ARE as a 
function of $\rho$ for two different values of $\gamma$ 
($\gamma
= 1$ and 9). 
Second, it is clear from Fig.~\ref{fig:mesh} that as 
$\gamma$ increases, hence
$\gamma / (\gamma + 1)$ climbs, the ARE
decreases quickly for all values of $\rho$ we considered.
This hints at high sensitivity of the equilibrium ARE
with respect to the cost effectiveness of available
security measures.

In addition to corroborating Theorem 
\ref{thm:1}, Fig.~\ref{fig:two} reveals two 
additional interesting
observations. First, it illustrates that the influence
of network mixing (equivalently, parameter $\rho$) on 
ARE is more pronounced when the security measures 
are more cost effective (i.e., $\gamma$ is larger); when 
$\gamma = 1$ (resp. $\gamma = 9$), the ARE rises from
14.9 to 16.23 (resp. from 2.508 to 2.935) as $\rho$
ascends from -0.3 to 0.3, which is roughly an 8.9
percent increase (resp. a 17 percent increase). Therefore, 
they indicate that, although the equilibrium AREs are
smaller when the security measures are more cost
effective, they also become more sensitive to the bias 
in REs caused by assortativity.  

Second, the ARE is a {\em convex} function of 
$\rho$. This hints that the impact of degree 
correlations on ARE gets stronger as the dependence
graph becomes more assortative. As a result, a drop
in ARE a disassortative dependence graph enjoys
may not be as large as an increase in ARE an assortative
dependence graph suffers. This in turn suggests
that social networks, which in general exhibit 
non-negligible positive degree correlations
\cite{Newman2002, Newman2003}, may
experience significant deterioration in security 
relative to the findings obtained using 
neutral networks.

\subsection{Effects of ARE function $\Theta$}

In our second study, we explore how the ARE
function $\Theta$ in (\ref{eq:ARE1}) 
affects equilibrium ARE. In particular, we are
interested in how sensitive the ARE is to the
assortativity of dependence graph as we vary the
shape of the function $\Theta$. 

\begin{figure}[h]
\centerline{
	\includegraphics[width=3.3in]{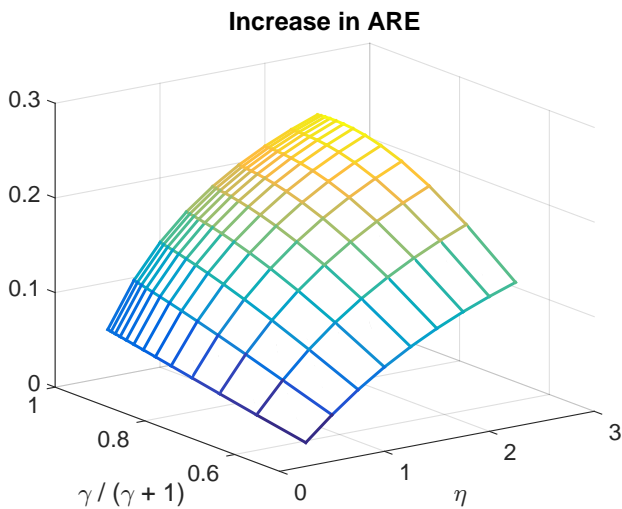}
}
\caption{Plots of the increase in ARE as a function of $\gamma$
	and $\eta$.}
\label{fig:increase-mesh}
\end{figure}

To this end, we adopt a family of functions of the form 
$\Theta(z) = K'(\eta) z^\eta$ with $K'(\eta), \eta > 0$, where 
the parameter $\eta$ is used to change the shape of the 
function $\Theta$. 
In order to compare the ARE as we vary $\eta$, we adjust 
the value of parameter $K'(\eta)$ as a function of $\eta$ so 
that the equilibrium ARE is identical under the neutral 
dependence graph (i.e., $\rho = 0$) with $\gamma = 5$ 
(equivalently, $\gamma / (\gamma + 1) = 0.8\bar{3}$)
for all values of $\eta$ we consider.

Fig.~\ref{fig:increase-mesh} plots the increase in ARE as we 
vary $\rho$ from -0.3 to 0.3 for different values of $(\gamma, 
\eta)$. More precisely, each point in the figure represents
the difference in ARE for $\rho =$ 0.3 and -0.3, divided by 
the value of ARE for $\rho = -0.3$. 

It is clear from Fig.~\ref{fig:increase-mesh} that 
when $\gamma / (\gamma + 1)$ is larger (indicating that 
the security measures are more cost effective), the ARE 
is more sensitive to assortativity because the relative 
increase in ARE is greater for all considered values of 
$\eta$. This confirms our finding
in the previous subsection (illustrated by Fig. 
\ref{fig:two}). 

More importantly, Fig.~\ref{fig:increase-mesh}  
reveals that assortativity has greater 
impact on ARE when the ARE function $\Phi$ is more sensitive 
to the vulnerability of neighbors summarized by $\sum_{d \in 
\cD} w_d(\bs) \ p(a_d)$. This observation is somewhat 
intuitive; as ARE becomes more sensitive to the vulnerability
of agents, any changes in the security investments of 
agents will likely amplify the effects other parameters,
including the assortativity of dependence graph.

\section{Conclusion}
	\label{sec:Conclusion}

We studied the effects of degree correlations on network security
in IDS. Our findings reveal that the network security 
degrades when agents with larger degrees experience higher
risks than those with smaller degrees. Moreover, somewhat 
unexpectedly, the cost effectiveness of available security 
measures determines how network mixing influences network 
security. Finally, our numerical studies suggest that as
the infection probability or the vulnerability of neighboring
agents becomes more sensitive to security investments, 
assortativity exerts greater impact on network security. 
Our analytical study carried out only a 
local analysis around neutral 
dependence graphs. We are currently working to generalize
our results beyond the local neighborhood of neutral 
graphs.


\begin{thebibliography}{1}

\bibitem{Albert2000}R. Albert, H. Jeong and A.-L. Barab$\acute{\rm a}$si, 
\lq \lq Error and attack tolerance of complex networks,''
{\em Nature}, 406:378-382, Jul. 2000. 

\bibitem{BaglerSinha2007}
G. Bagler and S. Sinha, 
``Assortative mixing in protein contact networks and protein
	folding kinetics,"
{\em Bioinformatics}, 23(14):1760-1767, 2007. 

\bibitem{Bary}
Y. Baryshnikov, 
``{IT} security investment and Gordon-Loev's $1/e$ rule,"
Proc. of the 11th Annual Workshop on the Economics of Information 
	Security (WEIS), Berlin (Germany), Jun. 2012. 

\bibitem{Beale2011}
N. Beale, D.G. Rand, H. Battey, K. Croxson, R.M. May and M.A. Nowak, 
\lq \lq Individual versus systemic risk and the regulator's dilemma,''
{\em Proceedings of the National Academy of Sciences of the United
States of America (PNAS)}, 108(31):12647-12652, Aug. 2011. 

\bibitem{Boguna2003a}
M. Bogu${\tilde{\rm n}}{\acute{\rm a}}$, R. Pastor-Satorras
and A. Vespignani, 
``Absence of epidemic threshold in scale-free networks with 
	degree correlations,"
{\em Phys. Rev. Lett.}, 90(2), 028701, Jan. 2003.

\bibitem{Boguna2003b}
M. Bogu${\tilde{\rm n}}{\acute{\rm a}}$, R. Pastor-Satorras
and A. Vespignani, 
``Epidemic spreading in complex networks with degree 
	correlations,"
{\em Lecture Notes in Physics}, 625:127-147, Sep. 2003.

\bibitem{BolotLelarge2008}
J.C. Bolot and M. Lelarge,
\lq \lq A new perspective on Internet security using insurance,"
Proc. of IEEE INFOCOM, Phoenix (AZ), Apr. 2008.

\bibitem{Bou-Harb}
E. Bou-Harb, C. Fachkha, M. Pourzandi, M. Debbabi, and C. Assi, 
\lq \lq Communication security for smart grid distribution networks,''
{\em IEEE Communications Magazine}, 51(1):42-49, Jan. 2013. 

\bibitem{Caccioli2011}
F. Caccioli, T.A. Catanach, and J.D. Farmer, 
\lq \lq Heterogeneity, correlations and financial contagion,''
arXiv:1109.1213, Sep. 2011. 

\bibitem{Caccioli2012}
F. Caccioli, T.A. Catanach, and J.D. Farmer, 
\lq \lq Stability analysis of financial contagion due to overlapping 
	portfolios,''
 arXiv:1210.5987, Oct. 2012.

\bibitem{CoupLelarge2}
E. Coupechoux and M. Lelarge, 
``How clustering affects epidemics in random networks,"' 
{\em Advances in Applied Probability}, 46(4):985-1008.

\bibitem{Gkonis2010}
K.G. Gkonis and H.N. Psaraftis,
``Container transportation as an interdependent security problem,"
{\em Journal of Transportation Security}, 
3(4):197-211, Dec. 2010. 

\bibitem{HealKun2002}
G. Heal, H.C. Kunreuther and P.R. Orszag,
``Interdependent security: Implications for homeland 
	security policy and other areas,"
Brookings Policy Brief Series, \#108, Oct. 2002. 

\bibitem{Jiang2011}
L. Jiang, V. Anantharam and J. Walrand, 
\lq \lq How bad are selfish investments in network security?",
{\em IEEE/ACM Trans. on Networking}, 19(2):549-560, 
Apr. 2011. 

\bibitem{Kakutani}
S. Kakutani, 
``A generalization of Brouwer's fixed point theorem,"
{\em Duke Mathematical Journal}, 8(3):457-459, 1941. 

\bibitem{KearnsOrtiz}
M. Kearns and L.E. Ortiz, 
``Algorithms for interdependent security games,"
{\em Advances in Neural Information Processing Systems 16}, 
2003. 

\bibitem{KunHeal2003}
H. Kunreuther and G. Heal, 
\lq \lq Interdependent Security,"
{\em The Journal of Risk and Uncertainty}, 
26(2/3):231-249, 2003. 

\bibitem{KunMichel2009}
H.C. Kunreuther and E.O. Michel-Kerjan,
``Assessing, managing and benefiting from global interdependent
	risks: The case of terrorism and natural disasters,"
{\em Global Business and the Terrorist Threat}, 
edited by H.W. Richardson, P. Gordon and J.E. Moore, 
Edward Elgar Publishing, 2009. 

\bibitem{La_CDC2014}
R.J. La, 
``Role of network topology in cybersecurity," 
Proc. of IEEE Conference on Control and Decision (CDC), 
Los Angeles (CA), Dec. 2014.
 
\bibitem{La_TON}
R.J. La, 
``Interdependent security with strategic agents and global 
cascades," {\em IEEE/ACM Trans. on Networking} (ToN), 
24(3):1378-1391, Jun. 2016.

\bibitem{La_TNSE}
R.J. La, 
``Internalization of externalities in interdependent
security: large network cases,"
available at http://arxiv.org/abs/1703.01380.

\bibitem{La_Globecom}
R.J. La,
``Influence of network mixing on interdependent security:
local analysis," Proc. of IEEE Globecom, Washington D.C., 
Dec. 2016. 

\bibitem{Lakhina2003}
A. Lakhina, J. Byers, M. Crovella and P. Xi, 
\lq \lq Sampling biases in {IP} topology measurements,"'
Proc. of IEEE INFOCOM, San Francisco (CA), Apr. 2003. 

\bibitem{Laszka}
A. Laszka, M. Felegyhazi and L. Butty$\acute{{\rm a}}$n, 
``A survey of interdependent information security games,"
{\em ACM Computing Surveys}, 47(2):23:1-23:38, Jan. 2015. 

\bibitem{LelargeBolot2008}
M. Lelarge and J. Bolot, 
\lq \lq A local mean field analysis of security investments
in networks,"
Proc. of the 3rd International Workshop on Economics of Networked 
Systems (NetEcon), pp. 25-30, Seattle (WA), Aug. 2008. 

\bibitem{LelargeBolot2009}
M. Lelarge and J. Bolot, 
\lq \lq Economic incentives to increase security in the 
Internet: the case for insurance,"
Proc. of IEEE INFOCOM, Rio de Janeiro (Brazil), Apr. 2009. 

\bibitem{LeskovecFaloutsos}
``Sampling from large graphs,"
Proc. of ACM Knowledge Discovery and Data Mining (KDD), 
Philadelphia (PA), Aug. 2006. 

\bibitem{Newman2002}
M.E.J. Newman, 
``Assortative mixing in networks," 
{\em Phys. Rev. Lett.}, 89, 208701, Oct. 2002.

\bibitem{Newman2003}
M.E.J. Newman, 
``Mixing patterns in networks," 
{\em Phys. Rev. E}, 67, 026126, Feb. 2003.

\bibitem{Pastor2005}
R. Pastor-Satorras and A. Vespignani,
``Epidemics and immunization in scale-free networks,"
{\em Handbook of Graphs and Networks: From the Genome to the 
	Internet}, Wiley, 2005. 

\bibitem{Pira2012}
M. Piraveenan, M. Prokopenko and A. Zomaya, 
``Assortative mixing in directed biological networks,"
{\em IEEE/ACM Trans. on Computational Biology and 
	Bioinformatics}, 9(1):66-78, Jan. 2012. 

\bibitem{Rudin}
W. Rudin, 
{\em The Principles of Mathematical Analysis}, third ed.,
McGraw-Hill, 1976.

\bibitem{Sandholm}
W.H. Sandholm {\em Population Games and Evolutionary Dynamics}, 
The MIT Press, 2010.

\bibitem{Schneider2011}
C.M. Schneider, M. Tamara, H. Shlomo and H.J. Herrmann, 
``Suppressing epidemics with a limited amount of immunization
	units,"
{\em Phys. Rev. E}, 84, 061911, Dec. 2011. 

\bibitem{SO}
M. Shaked and J.G. Shanthikumar,
{\em Stochastic Orders}, 
Springer Series in Statistics, Springer, 2007. 

\bibitem{ShapiroVarian}
C. Shapiro and H.R. Varian,
{\em Information Rules},  Harvard Business School Press, 1999.

\bibitem{Varian_Microeconomics}
H.R. Varian, 
{\em Microeconomic Analysis}, 3rd edition, 
W.W. Norton \& Company, 
1992. 

\bibitem{YenReiter2012}
T.-F. Yen and M.K. Reiter, 
``Revisiting botnet models and their implications for 
	takedown strategies,"
{\em Lecture Notes in Computer Science}, 7215, pp.
249-268, 2012. 

\bibitem{Zhao2009}
X. Zhao, L. Xue and A.B. Whinston, 
``Managing interdependent information security risks: 
cyberinsurance, managed security services, and risk pooling
arrangements,"
Proc. of International Conference on Information Systems (ICIS), 
Phoenix (AZ), 2009. 

\bibitem{Zhou2012}
D. Zhou, H.E. Stanley, G. D'Agostino and A. Scala, 
``Assortativity decreases the robustness of interdependent
	networks,"
{\em Phys. Rev. E}, 86, 066103, 2012. 

\end{thebibliography}
\end{document}